\makeatletter \hypersetup{pdftitle={\@title}}}
\def\ifpdf\input{#.pdf_t}\else\input{#.pstex_t}\fi1{\ifpdf\input{#1.pdf_t}\else\input{#1.pstex_t}\fi}
\def\GrabProofArgument[#1]{ #1: \egroup\ignorespaces}
\def\proof{\noindent\emph\bgroup Proof%
           \@ifnextchar[{\GrabProofArgument}{. \egroup\ignorespaces}}
\let\epsilon=\varepsilon
\def\Dist{\mathop{\mathrm{Dist}}\nolimits}
\def\Cooperative{Cooperative}
\def\edgecost{creation cost}
\def\pathcost{usage cost}
\newcommand{\polylg}{\mathop{\rm polylg}\nolimits}
\begin{document}

\title[The Price of Anarchy in Cooperative Network Creation Games]
      {The Price of Anarchy in \\ Cooperative Network Creation Games}

\author[MIT]{E. D. Demaine}{Erik D. Demaine}
\address[MIT]{MIT Computer Science and Artificial Intelligence Laboratory,
              32 Vassar St., Cambridge, MA 02139, USA}
\email{edemaine@mit.edu}

\author[MIT,ATT]{M. Hajiaghayi}{MohammadTaghi Hajiaghayi}
\address[ATT]{AT\&T Labs --- Research, 180 Park Ave., Florham Park, NJ 07932,
              USA}
\email{hajiagha@research.att.com}

\author[ITPM,Sharif]{H. Mahini}{Hamid Mahini}
\address[ITPM]{School of Computer Science, Institute for Theoretical Physics
               and Mathematics, Tehran}
\email{mahini@ce.sharif.edu}

\author[Sharif]{M. Zadimoghaddam}{Morteza Zadimoghaddam}
\address[Sharif]{Department of Computer Engineering,
                 Sharif University of Technology}
\email{zadimoghaddam@ce.sharif.edu}

\begin{abstract}
We analyze
the structure of equilibria and the price of anarchy in the family of
network creation games considered extensively in the past few years,
which attempt to unify the network design and network routing problems
by modeling both creation and usage costs.
In general, the games are played on a host graph,
where each node is a selfish independent agent (player)
and each edge has a fixed link creation cost~$\alpha$.
Together the agents create a network (a subgraph of the host graph)
while selfishly minimizing the link creation costs plus
the sum of the distances to all other players (usage cost).
In this paper, we pursue two important facets of the network creation~game.

First, we study extensively a natural version of the game, called
the cooperative model, where nodes can collaborate and share the cost of
creating any edge in the host graph.  We prove the first nontrivial bounds
in this model, establishing that the price of anarchy is
polylogarithmic in $n$ for all values of~$\alpha$ in complete host graphs.
This bound is the first result of this type for any version of the network
creation game; most previous general upper bounds are polynomial in~$n$.
Interestingly, we also show that equilibrium graphs have polylogarithmic
diameter for the most natural range of~$\alpha$ (at most $n \polylg n$).

Second, we study the impact of the natural assumption that the host graph
is a general graph, not necessarily complete.  This model is a simple example
of nonuniform creation costs among the edges (effectively allowing weights of
$\alpha$ and~$\infty$).
We prove the first assemblage of upper and lower bounds for this context,
establishing nontrivial tight bounds for many ranges of~$\alpha$,
for both the unilateral and cooperative versions of network creation.
In particular, we establish polynomial lower bounds for both versions and many
ranges of~$\alpha$, even for this simple nonuniform cost model, which sharply
contrasts the conjectured constant bounds for these games in complete
(uniform) graphs.
\end{abstract}

\maketitle

\section{Introduction}

A fundamental family of problems at the intersection between computer science
and operations research is \emph{network design}.  This area of research has
become increasingly important given the continued growth of computer networks
such as the Internet.
Traditionally, we want to find a minimum-cost (sub)network that
satisfies some specified property such as $k$-connectivity
or connectivity on terminals (as in the classic Steiner tree problem).
This goal captures the (possibly incremental) creation cost of
the network, but does not incorporate the cost of actually using the network.
In contrast, \emph{network routing} has the goal of optimizing the usage cost
of the network, but assumes that the network has already been created.

\emph{Network creation games} attempt to unify the network design
and network routing problems by modeling both creation and usage costs.
In general, the game is played on a \emph{host graph},
where each node is an independent agent (player),
and the goal is to create a network from a subgraph of the host graph.
Collectively, the nodes decide which edges of the host graph are worth
creating as links in the network.
Every link has the same creation cost~$\alpha$.
(Equivalently, links have creation costs of $\alpha$ and~$\infty$,
depending on whether they are edges of the host graph.)
In addition to these creation costs, each node incurs a usage cost
equal to the sum of distances to all other nodes in the network.
Equivalently, if we divide the cost (and thus~$\alpha$) by the number $n$ of
nodes, the usage cost for each node is its average distance to all other nodes.
(This natural cost model has been used in, e.g., contribution games and
network-formation games.)

There are several versions of the network creation game that vary how
links are purchased.  In the \emph{unilateral} model---introduced by
Fabrikant, Luthra, Maneva, Papadimitriou, and Shenker \cite{FLMPS03}---every
node (player) can locally decide to purchase any edge incident
to the node in the host graph, at a cost of~$\alpha$.
In the \emph{bilateral} model---introduced
by Corbo and Parkes \cite{CP05}---both endpoints of an edge must agree
before they can create a link between them, and the two nodes share the
$\alpha$ creation cost equally.
In the \emph{cooperative} model---%
introduced by Albers, Eilts, Even-Dar, Mansour, and Roditty~\cite{AEEMR06}---any
node can purchase any amount of any edge in the host graph,
and a link gets created when the total purchased amount is at least~$\alpha$.

To model the dominant behavior of large-scale networking scenarios
such as the Internet, we consider the case where every node (player)
selfishly tries to minimize its own creation and usage cost
\cite{Jackson03,FLMPS03,AEEMR06,CP05}.
This game-theoretic setting naturally leads to the various kinds of
\emph{equilibria} and the study of their structure.
Two frequently considered notions are \emph{Nash equilibrium}
\cite{Nash50,Nash51}, where no player can change its strategy
(which edges to buy) to locally improve its cost, and
\emph{strong Nash equilibrium} \cite{Aumann59,AFM07,Albers08},
where no coalition of players
can change their collective strategy to locally improve the cost
of each player in the coalition.
Nash equilibria capture the combined effect of both selfishness and lack of
coordination, while strong Nash equilibria separates these issues,
enabling coordination and capturing the specific effect of selfishness.
However, the notion of strong Nash equilibrium is extremely restrictive
in our context,
because all players can simultaneously change their entire strategies,
abusing the local optimality intended by original Nash equilibria,
and effectively forcing globally near-optimal solutions
\cite{AFM07}.

We consider weaker notions of equilibria, which broadens the scope of
equilibria and therefore strengthens our upper bounds,
where players can change their strategy on only a single edge at a time.
In a \emph{collaborative equilibrium}, even coalitions of players do
not wish to change their collective strategy on any single edge;
this concept is particularly important for the cooperative network
creation game, where multiple players must negotiate their relative
valuations of an edge.
(This notion is the natural generalization of pairwise stability
from \cite{CP05} to arbitrary cost sharing.)
Collaborative equilibria are essentially a compromise between Nash
and strong Nash equilibria: they still enable coordination among players
and thus capture the specific effect of selfishness, like strong Nash,
yet they consider more local moves, in the spirit of Nash.
In particular, any results about all collaborative equilibria also apply
to all strong Nash equilibria.
Collaborative equilibria also make more sense computationally:
players can efficiently detect equilibrium using a simple bidding procedure
(whereas this problem is NP-hard for strong Nash),
and the resulting dynamics converge to such equilibria
(see Section~\ref{dynamics}).

The structure of equilibria in network creation games is not very well
understood.  For example, Fabrikant et al.~\cite{FLMPS03} conjectured
that equilibrium graphs in the unilateral model were all trees,
but this conjecture was disproved by Albers et al.~\cite{AEEMR06}.
One particularly interesting structural feature is whether all equilibrium
graphs have small \emph{diameter} (say, polylogarithmic),
analogous to the small-world phenomenon \cite{Kleinberg00,EK06},
In the original unilateral version of the problem, the best general lower
bound is just a constant and the best general upper bound is polynomial.
A closely related issue is the \emph{price of anarchy}
\cite{KP99,papa01,roughgarden-phd},
that is, the worst possible ratio of the total cost of an equilibrium
(found by independent selfish behavior) and the optimal total cost possible
by a centralized solution (maximizing social welfare).
The price of anarchy is a well-studied concept in algorithmic game theory
for problems such as load balancing, routing, and network design;
see, e.g.,
\cite{papa01,CV02,Roughgarden02,FLMPS03,ADTW03,ADKTW04,CFSK04,CP05,AEEMR06,PODC07}.
Upper bounds on diameter of equilibrium graphs translate to approximately
equal upper bounds on the price of anarchy, but not necessarily vice versa.
In the unilateral version, for example, there is a general $2^{O(\sqrt{\lg n})}$
upper bound on the price of anarchy.

\paragraph{\bf Previous work.}
Network creation games have been studied extensively in the literature
since their introduction in 2003.

For the unilateral version and a complete host graph,
Fabrikant et al.~\cite{FLMPS03} prove an upper bound of $O(\sqrt{\alpha})$
on the price of anarchy for all~$\alpha$.
Lin \cite{Lin03} proves that the price of anarchy is constant
for two ranges of $\alpha$: $\alpha = O(\sqrt n)$ and
$\alpha \geq c \, n^{3/2}$ for some $c > 0$.
Independently, Albers et al.~\cite{AEEMR06} prove that
the price of anarchy is constant for $\alpha = O(\sqrt n)$,
as well as for the larger range
$\alpha \geq 12 \, n \lceil \lg n \rceil$.
In addition, Albers et al.\ prove a general upper bound of $15
\left(1+(\min\{\frac{\alpha^2}{n},\frac{n^2}{\alpha}\})^{1/3}\right)$.
The latter bound shows the first sublinear worst-case bound, $O(n^{1/3})$,
for all~$\alpha$.
Demaine et al.~\cite{PODC07} prove the first $o(n^\epsilon)$
upper bound for general~$\alpha$, namely, $2^{O(\sqrt{\lg n})}$.
They also prove a constant upper bound for $\alpha = O(n^{1-\epsilon})$
for any fixed $\epsilon > 0$,
and improve the constant upper bound by Albers et al.\
(with the lead constant of~$15$) to $6$ for $\alpha < (n/2)^{1/2}$ and
to $4$ for $\alpha < (n/2)^{1/3}$.
Andelmen et al.~\cite{AFM07} show that, among strong Nash equilibria,
the price of anarchy is at most~$2$.

For the bilateral version and a complete host graph,
Corbo and Parkes \cite{CP05} prove that the price of anarchy is
between $\Omega(\lg \alpha)$ and $O(\min\{\sqrt \alpha,n/\sqrt \alpha)$.
Demaine et al.~\cite{PODC07} prove that the upper bound is tight,
establishing the price of anarchy to be $\Theta(\min\{\sqrt{\alpha},
n/\sqrt{\alpha}\})$ in this case.

For the cooperative version and a complete host graph,
the only known result is an upper bound of
$15 \left(1+(\min\{\frac{\alpha^2}{n},\frac{n^2}{\alpha}\})^{1/3}\right)$,
proved by Albers et al.~\cite{AEEMR06}.

Other variations of network creation games allow nonuniform interests
in connectivity between nodes \cite{HM07} and nodes with limited budgets
for buying edges \cite{LPRST08}.

\paragraph{\bf Our results.}

Our research pursues two important facets of the network creation game.

First, we make an extensive study of a natural version of the game---the
cooperative model---where the only previous results were simple extensions
from unilateral analysis.  We substantially improve the bounds in this case,
showing that the price of anarchy is polylogarithmic in $n$ for \emph{all
values of~$\alpha$} in complete graphs.  This is the first result of this type
for any version of the network creation game.
As mentioned above, this result applies to both collaborative equilibria
and strong Nash equilibria.
Interestingly, we also show that equilibrium graphs have polylogarithmic
diameter for the most natural range of~$\alpha$ (at most $n \polylg n$).
Note that, because of the locally greedy nature of Nash equilibria,
we cannot use the classic probabilistic spanning (sub)tree embedding machinery
of \cite{Bartal98,FRT04,EEST05}
to obtain polylogarithmic bounds (although this machinery can be applied
to approximate the global social optimum).

Second, we study the impact of the natural assumption that the host graph
is a general graph, not necessarily complete, inspired by practical
limitations in constructing network links.  This model is a simple example
of nonuniform creation costs among the edges (effectively allowing weights of
$\alpha$ and~$\infty$).  Surprisingly, no bounds on the diameter or
the price of anarchy have been proved before in this context.
We prove several upper and lower bounds, establishing nontrivial
tight bounds for many ranges of~$\alpha$, for both the unilateral and
cooperative versions.
In particular, we establish polynomial lower bounds for both versions and
many ranges of~$\alpha$, even for this simple nonuniform cost model.
These results are particularly interesting because,
by contrast, no superconstant lower bound has been shown for either game
in complete (uniform) graphs.
Thus, while we believe that the price of anarchy is polylogarithmic
(or even constant) for complete graphs, we show a significant departure
from this behavior in general graphs.

Our proof techniques are most closely related in spirit to
``region growing'' from approximation algorithms; see, e.g.,
\cite{LR99}.  Our general goal is to prove an upper bound on diameter
by way of an upper bound on the expansion of the graph.  However,
we have not been able to get such an argument to work directly in general.
The main difficulty is that, if we imagine building a breadth-first-search tree
from a node, then connecting that root node to another node does not
necessarily benefit the node much: it may only get closer to a small fraction
of nodes in the BFS subtree.  Thus, no node is motivated selfishly to improve
the network, so several nodes must coordinate their changes to make
improvements.  The cooperative version of the game gives us some leverage
to address this difficulty.  We hope that this approach, particularly the
structure we prove of equilibria, will shed some light on the still-open
unilateral version of the game, where the best bounds on the price of anarchy
are $\Omega(1)$ and $2^{O(\sqrt{\lg n})}$.

Table~\ref{summary} summarizes our results.
Section~\ref{Cooperative Version in Complete Graphs} proves our
polylogarithmic upper bounds on the price of anarchy
for all ranges of $\alpha$ in the cooperative network creation
game in complete graphs.
Section~\ref{Cooperative Version in General Graphs} considers how
the cooperative network creation game differs in general graphs,
and proves our upper bounds for this model.
Section~\ref{Unilateral Version in General Graphs} extends these
results to apply to the unilateral network creation game
in general graphs.
Section~\ref{Lower Bounds in General Graphs} proves lower bounds for
both the unilateral and cooperative network creation games
in general graphs, which match our upper bounds for some ranges of~$\alpha$.

\begin{table*}
  \centering
  \footnotesize
  \tabcolsep=0pt
  \def\LABEL#1{\hbox to 0pt{\hss#1\hss}}
  \def\BOUND#1#2{\multicolumn{#1}{|c|}{\hbox{\,}#2\hbox{\,}}}
  \hbox to \hsize{\hss
  \begin{tabular}{lclclclclclclclclclc}
    \multicolumn{1}{r}{$\alpha = $ \hbox{~~}}
    &\LABEL{$0$}& \hbox{\qquad}
    &\LABEL{$n$}& \hbox{\qquad\qquad}
    &\LABEL{$n \lg^{0.52} n$}& \hbox{\qquad\qquad}
    &\LABEL{$n \lg^{7.16} n$\hspace{1em}}& \hbox{\qquad\quad}
    &\LABEL{$n^{3/2}$}& \hbox{\qquad}
    &\LABEL{\hspace*{1em}$n^{5/3}$}& \hbox{\qquad\qquad}
    &\LABEL{$n^2$} & \hbox{\qquad\qquad}
    &\LABEL{$n^2 \lg n$} & \hbox{\qquad}
    &\LABEL{$\infty$}
    \\ \cline{2-17}
   \Cooperative, complete graph\hbox{~} && \BOUND{1}{$\Theta(1)$} && \BOUND{1}{$\lg^{3.32} n$} && \BOUND{1}{$O\big(\lg n {+} \sqrt{n \over \alpha} \lg^{3.58}n\big)$} &&  \BOUND{9}{$\Theta(1)$}
    \\ \cline{2-17}
    \Cooperative, general graph\hbox{~} && \BOUND{1}{$O(\alpha^{1/3})$} && \BOUND{7}{$O(n^{1/3})$, $\Omega(\sqrt{{\alpha \over n}})$}  &&
\BOUND{1}{$\Theta({n^2 \over \alpha})$}&&
\BOUND{1}{$O\big({n^2 \over \alpha}\lg n\big)$} && \BOUND{1}{$\Theta(1)$}
    \\ \cline{2-17}
    Unilateral, general graph\hbox{~} && \BOUND{1}{$O(\alpha^{1/2})$} && \BOUND{5}{$O(n^{1/2})$, $\Omega({\alpha \over n})$} && \BOUND{3}{$\Theta({n^2 \over \alpha})$} && \BOUND{3}{$\Theta(1)$}
    \\ \cline{2-17}
  \end{tabular}
  \hss
  }
  \caption{Summary of our bounds on equilibrium diameter and price of anarchy
    for cooperative network creation in complete graphs,
    and unilateral and cooperative network creation in general graphs.
    For all three of these models, our bounds are strict
    improvements over the best previous bounds.}
  \label{summary}
\end{table*}

\section{Models}

In this section, we formally define the different models
of the network creation game.

\subsection{Unilateral Model}

We start with the unilateral model, introduced in~\cite{FLMPS03}.
The game is played on a \emph{host graph} $G = (V,E)$.
Assume $V = \{1, 2, \dots, n\}$.
We have $n$ players, one per vertex.
The strategy of player $i$ is specified by a subset $s_i$ of
$\{j : \{i,j\} \in E\}$, defining the set of neighbors
to which player $i$ creates a link.
Thus each player can only create links corresponding to edges incident to
node $i$ in the host graph $G$
Together, let $s = \langle s_1, s_2, \dots, s_n \rangle$
denote the joint strategy of all players.

To define the cost of strategies,
we introduce a spanning subgraph $G_s$ of the host graph~$G$.
Namely, $G_s$ has an edge $\{i,j\} \in E(G)$ if either $i \in s_j$
or $j \in s_i$.
Define $d_{G_s}(i,j)$ to be the distance between vertices $i$ and $j$
in graph~$G_s$.
Then the cost incurred by player $i$ is
$
c_i(s) = \alpha \, |s_i| + \sum_{j=1}^n d_{G_s}(i,j).
$
The total cost incurred by joint strategy $s$ is
$c(s) = \sum_{i=1}^n c_i(s)$.

A (pure) \emph{Nash equilibrium} is a joint strategy $s$
such that $c_i(s) \leq c_i(s')$ for all joint strategies $s'$
that differ from $s$ in only one player~$i$.
The \emph{price of anarchy} is then the maximum cost of a Nash equilibrium
divided by the minimum cost of any joint strategy
(called the \emph{social optimum}).

\subsection{\Cooperative\ Model}

Next we turn to the cooperative model, introduced in \cite{FLMPS03,AEEMR06}.
Again, the game is played on a host graph $G = (V,E)$,
with one player per vertex.
Assume $V = \{1, 2, \dots, n\}$ and $E = \{e_1, e_2, \dots, e_{|E|}\}$.
Now the strategy of player $i$ is specified by a vector
$s_i = \langle s(i, e_1), s(i,e_2), \dots, s(i, e_{|E|}) \rangle$,
where $s(i, e_j)$ corresponds to the value that player $i$
is willing to pay for link~$e_j$.
Together, $s = \langle s_1, s_2, \dots, s_n \rangle$
denotes the strategies of all players.

We define a spanning subgraph $G_s = (V,E_s)$ of the host graph~$G$:
$e_j$ is an edge of $G_s$ if $\sum_{i \in V(G)} s(i,e_j) \geq \alpha$.
To make the total cost for an edge $e_j$ exactly $0$ or $\alpha$ in all cases,
if $\sum_{i \in V(G)} s(i,e_j) > \alpha$, we uniformly scale the costs
to sum to~$\alpha$: $s'(i,e_j) = \alpha s(i,e_j)/\sum_{k \in V(G)} s(k,e_j)$
(Equilibria will always have $s = s'$.)
Then the cost incurred by player $i$ is
$
c_i(s) = \sum_{e_j \in E_s} s'(i,e_j) + \sum_{j=1}^n d_{G_s}(i,j).
$
The total cost incurred by joint strategy $s$ is
$
c(s) = \alpha \, |E_s| + \sum_{i=1}^n \sum_{j=1}^n d_{G_s}(i,j).
$

In this cooperative model, the notion of Nash equilibrium is less natural
because it allows only one player to change strategy, whereas a cooperative
purchase in general requires many players to change their strategy.
Therefore we use a stronger notion of equilibrium that allows coalition
among players, inspired by the strong Nash equilibrium of
Aumann \cite{Aumann59}, and modeled after the pairwise stability property
introduced for the bilateral game \cite{CP05}.
Namely, a joint strategy $s$ is \emph{collaboratively equilibrium} if,
for any edge $e$ of the host graph~$G$,
for any coalition $C \subseteq V$,
for any joint strategy $s'$ differing from $s$
in only $s'(i,e)$ for $i \in C$,
some player $i \in C$ has $c_i(s') > c_i(s)$.
Note that any such joint strategy must have every sum
$\sum_{i \in V(G)} s(i,e_j)$
equal to either $0$ or~$\alpha$, so we can measure the cost $c_i(s)$
in terms of $s(i,e_j)$ instead of $s'(i,e_j)$.
The \emph{price of anarchy} is the maximum cost of a collaborative
equilibrium divided by the minimum cost of any joint strategy
(the \emph{social optimum}).

\label{dynamics}
We can define a simple dynamics for the cooperative network creation game
in which we repeatedly pick a pair of vertices, have all players determine
their valuation of an edge between those vertices (change in $c_i(s)$
from addition or removal), and players thereby bid on the edge and change
their strategies.  These dynamics always converge to a collaborative
equilibrium because each change decreases the total cost $c(s)$,
which is a discrete quantity in the lattice $\mathbb Z + \alpha \mathbb Z$.
Indeed, the system therefore converges after a number of steps polynomial
in $n$ and the smallest integer multiple of $\alpha$ (if one exists).
More generally, we can show an exponential upper bound in terms of just $n$
by observing that the graph uniquely determines $c(s)$, so we can never
repeat a graph by decreasing $c(s)$.

\section{Preliminaries}
\label{preliminaries}

In this section, we define some helpful notation and prove some
basic results. Call a graph $G_s$ corresponding to an
equilibrium joint strategy $s$ an \emph{equilibrium graph}.
In such a graph, let
$d_{G_s}(u,v)$ be the length of the shortest path from $u$ to~$v$ and
$\Dist_{G_s}(u)$ be $\sum_{v \in V(G_s)}d_{G_s}(u,v)$. Let $N_k(u)$ denote
the set of vertices with distance at most $k$ from vertex~$u$, and
let $N_k = \min_{v \in G} |N_k(v)|$.
In both the unilateral and cooperative network creation games,
the total cost of a strategy consists of two parts.
We refer to the cost of buying edges as the \emph{\edgecost}
and the cost $\sum_{v \in V(G_s)}d_{G_s}(u,v)$ as the \emph{\pathcost}.

First we prove the existence of collaborative equilibria for
complete host graphs.
Similar results are known in the unilateral case \cite{FLMPS03,AFM07}.

\begin{lemma}
  In the cooperative network creation game,
  any complete graph is a collaborative equilibrium for $\alpha \leq 2$,
  and any star graph is a collaborative equilibrium for $\alpha \geq 2$.
\end{lemma}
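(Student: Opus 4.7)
The plan is to exhibit, for each case, a natural payment scheme that realizes the claimed graph as the outcome of some equilibrium strategy~$s$, and then verify the collaborative equilibrium condition by case analysis over the structural effect of a coalition's single-edge deviation.

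For the complete graph with $\alpha \leq 2$, I would use the symmetric payment scheme in which each edge $e=\{u,v\}$ is paid equally by its two endpoints, $s(u,e)=s(v,e)=\alpha/2$ and $s(w,e)=0$ for $w\notin\{u,v\}$. Since the host graph is already complete, any alternative $s'$ on a single edge $e=\{u,v\}$ either deletes $e$ or leaves it intact with redistributed contributions. If $e$ is deleted, then the only pairwise distance that changes is $d(u,v)$, which rises from $1$ to $2$, contributing $+1$ to the path cost of each of $u$ and $v$ and nothing elsewhere. Only the endpoints have positive stake in~$e$, so removal forces $\{u,v\}\cap C\neq\emptyset$, and any endpoint in $C$ saves only $\alpha/2\leq 1$ in creation cost against its $+1$ path increase, hence is not strictly better off. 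If instead $e$ survives, the graph is unchanged, so path costs are untouched, and the rescaling keeps the total actual payment on $e$ at exactly~$\alpha$; conservation then precludes every member of $C$ from simultaneously paying strictly less.

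For the star graph with $\alpha \geq 2$, I would let the center $c$ pay the full cost of every edge, $s(c,\{c,v\})=\alpha$ and all other contributions $0$. A coalition $C$ acting on a single edge $e$ yields two structurally different possibilities. If $e=\{c,v\}$ is an existing star edge and $C$ deletes it, then $v$ becomes disconnected from the rest of the graph, blowing up the path cost of every remaining player; in particular $c$, the unique positive contributor and hence necessarily a member of any coalition that can trigger the deletion, is strictly worse off. If instead $e=\{v,v'\}$ is a new leaf-leaf edge added by $C$, then the only pairwise distance that changes is $d(v,v')$, which drops from $2$ to $1$, so only $v$ and $v'$ gain anything in path cost, namely $1$ each. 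Any other member of $C$ paying a positive amount toward $e$ is strictly worse off, so I may assume only $v$ and $v'$ contribute; but then for both of them to strictly improve, each must pay strictly less than $1$, contradicting the requirement that the total actual payment on a built edge be $\alpha\geq 2$.

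The main subtlety, and the step I would handle most carefully, is the rescaling convention in the cooperative model, which lets a coalition inflate its $s$-values and rely on the uniform scaling down to~$\alpha$ to shift actual payments around. In every case the decisive tool is conservation of total actual payment (either $0$ or $\alpha$) on the edge in question; combined with the very tight path-cost budgets extracted from the specific graph structure (at most $1$ per endpoint after removal in the complete graph; exactly $1$ per endpoint of the new shortcut in the star), this is what produces the threshold $\alpha=2$ in both directions and rules out any Pareto improvement among the coalition's members.
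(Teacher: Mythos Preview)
The paper does not actually supply a proof of this lemma; it is stated in the Preliminaries and left implicit. Your argument is the natural one and is correct.

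One small case is missing from your star analysis: a coalition $C$ that acts on an existing center--leaf edge $\{c,v\}$ but keeps it, merely redistributing payments. This is dispatched by exactly the conservation argument you already used for the complete graph: the rescaled total on a surviving edge is $\alpha$, the center $c$ currently pays all of it, and every other vertex pays~$0$; hence any coalition member other than $c$ who takes on positive payment is strictly worse off, while if $C=\{c\}$ alone, $c$ cannot lower its actual payment without dropping the total below~$\alpha$ and deleting the edge. You may want to add one sentence covering this, since your case split as written (``delete an existing star edge'' versus ``add a new leaf--leaf edge'') is not exhaustive.

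A second, more delicate point concerns the boundary $\alpha=2$. Your arguments establish that no coalition member is \emph{strictly better off}, which is the standard blocking condition and clearly the intended one; note, however, that the paper's literal phrasing of collaborative equilibrium requires some coalition member to be \emph{strictly worse off}. At $\alpha=2$ the endpoint that drops an edge in $K_n$, or the leaf that pays exactly~$1$ for a new shortcut in the star, is indifferent rather than harmed. This is a wrinkle in the definition rather than in your proof, but it is worth flagging if you want the write-up to be airtight.
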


Next we show that, in the unilateral version,
a bound on the usage cost suffices to bound the total cost
of an equilibrium graph~$G_s$, similar to \cite[Lemma~1]{PODC07}.

\begin{lemma} \label{tree}
  The total cost of any equilibrium graph in the unilateral game is at most
  $\alpha \, n + 2 \sum_{u,v \in V(G_s)}d_{G_s}(u,v)$.
\end{lemma}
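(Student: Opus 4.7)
The plan is to split the total cost into creation and usage terms, writing
\[
c(s) \;=\; \alpha\,|E_s|\;+\;\sum_{u,v\in V(G_s)} d_{G_s}(u,v).
\]
With this decomposition, the claim is equivalent to the single inequality $\alpha\,|E_s|\leq\alpha n+\sum_{u,v\in V(G_s)} d_{G_s}(u,v)$; in words, the creation cost of an equilibrium graph is at most $\alpha n$ plus its usage cost, and the total cost then inherits the factor of~$2$ in the lemma statement.

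I would first set aside an arbitrary spanning tree $T\subseteq E_s$. Its $n-1$ edges contribute at most $\alpha(n-1)<\alpha n$ to the creation cost, which is absorbed by the $\alpha n$ term on the right. It therefore remains to charge the non-tree edges $E_s\setminus T$ against the usage cost. For each non-tree edge $e=\{u,v\}\in E_s\setminus T$, identify its buyer (without loss of generality $u$). Because $T\subseteq G_s-e$, the graph $G_s-e$ stays connected, so the Nash-equilibrium condition for player $u$ against the deviation ``drop $e$ from my strategy'' yields
\[
\alpha \;\leq\; \sum_{w\in V(G_s)} \bigl[d_{G_s-e}(u,w)-d_{G_s}(u,w)\bigr].
\]

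The final step is a charging argument: summing this local inequality over all non-tree edges gives $\alpha\,|E_s\setminus T|$ on the left, and one must bound the aggregate distance-increase on the right by $\sum_{u,v} d_{G_s}(u,v)$. The observation I would use is that for a non-tree edge $e=\{u,v\}$, the term $d_{G_s-e}(u,w)-d_{G_s}(u,w)$ is nonzero only when every shortest $u$-to-$w$ path in $G_s$ passes through $e$, and in that case the replacement path routed via $v$ (or via the tree~$T$) incurs a detour controlled by $d_{G_s}(u,w)$ itself. Summed over the affected pairs, each original distance is thus charged only a bounded amount.

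The main obstacle is precisely the bookkeeping of this charging: the same pair $(u,w)$ may be touched by the removal of several distinct non-tree edges, so a naive ``one edge per pair'' accounting fails. The remedy is to choose $T$ as a BFS tree from a well-chosen root and group non-tree edges by the depth of their endpoints in $T$; each distance then gets charged a bounded number of times, the aggregate fits within the usage-cost budget, and the desired inequality $\alpha\,|E_s|\leq\alpha n+\sum_{u,v}d_{G_s}(u,v)$ follows.
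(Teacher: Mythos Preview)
Your decomposition is correct, and isolating a spanning tree to absorb $\alpha(n-1)$ into the $\alpha n$ term is the right start. The gap is in the remaining step. You derive, for each non-tree edge $e$ bought by $u$, the inequality $\alpha \le \sum_w\bigl[d_{G_s-e}(u,w)-d_{G_s}(u,w)\bigr]$, and then propose to sum these over all non-tree edges and charge the aggregate to $\sum_{u,v}d_{G_s}(u,v)$. But you never carry out that charging, and your sketch (``group non-tree edges by depth in a BFS tree; each distance is charged a bounded number of times'') does not obviously work. The difficulty is real: if a single player $u$ owns many non-tree edges $e_1,\dots,e_k$, the $k$ single-edge inequalities all bound $\alpha$ by quantities that can be of the same order as $\mathrm{Dist}_T(u)-\mathrm{Dist}_{G_s}(u)$; summing them yields $k\alpha$ on the left but up to $k$ copies of that same quantity on the right, so nothing is gained. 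Removing edges one at a time simply does not aggregate.

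The clean argument, which is the one behind the paper's reference to \cite[Lemma~1]{PODC07}, avoids this by using a \emph{multi-edge} deviation. Fix a BFS tree $T$ of $G_s$ rooted at a vertex $r$ minimizing $\mathrm{Dist}_{G_s}(r)$. For each player $i$, consider the single unilateral deviation in which $i$ keeps only its tree purchases, $s'_i=s_i\cap E(T)$. The resulting graph still contains $T$, so
\[
\alpha|s_i|+\mathrm{Dist}_{G_s}(i)\ \le\ \alpha|s_i\cap E(T)|+\mathrm{Dist}_T(i).
\]
Summing over $i$ gives $\alpha|E_s|+\sum_{u,v}d_{G_s}(u,v)\le \alpha(n-1)+\sum_i\mathrm{Dist}_T(i)$. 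Since $T$ is a BFS tree from $r$, $d_T(i,w)\le d_{G_s}(i,r)+d_{G_s}(r,w)$, whence $\sum_i\mathrm{Dist}_T(i)\le 2n\,\mathrm{Dist}_{G_s}(r)\le 2\sum_{u,v}d_{G_s}(u,v)$ by the choice of $r$. This yields $\alpha|E_s|\le \alpha n+\sum_{u,v}d_{G_s}(u,v)$ and hence the lemma. The point you were missing is that the Nash condition lets a player drop \emph{all} of its non-tree edges simultaneously; exploiting this global deviation collapses the charging problem you identified into a single tree-distance estimate.
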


Next we prove a more specific bound for the cooperative version,
using the following bound on the number of edges in a graph of large girth:

\begin{lemma} {\rm \cite{DB91}} \label{sparse}
  The number of edges in an $n$-vertex graph of odd girth $g$
  is $O(n^{1+2/(g-1)})$.
\end{lemma}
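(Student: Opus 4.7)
The approach I plan is the classical Moore bound combined with an average-to-minimum-degree reduction. Write $g = 2k+1$; the goal is then to show $|E(G)| = O(n^{1+1/k})$. First I would reduce to bounding the minimum degree of a subgraph: the standard fact is that any graph of average degree $d$ contains a subgraph $H$ with minimum degree at least $d/2$, obtained by iteratively deleting any vertex of degree less than $d/2$ (each deletion only raises the average degree, so the process terminates with a nonempty $H$). Since $H$ is a subgraph of $G$, it inherits the odd-girth lower bound.

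Next, pick any vertex $v \in H$ and run breadth-first search to depth $k$. The key claim is that the BFS is injective up to depth $k$: any two distinct BFS paths from $v$ end at different vertices. Indeed, if two such paths $P_1,P_2$ of lengths $\ell_1,\ell_2 \leq k$ met at a common endpoint $u$, then the closed walk obtained by concatenating $P_1$ with the reverse of $P_2$ would have length $\ell_1 + \ell_2 \leq 2k = g-1$ and would contain a cycle of length at most $g-1$. Combined with the hypothesis on odd girth (as exploited in \cite{DB91}), this is impossible. Consequently the ball of radius $k$ around $v$ in $H$ has size at least $1 + \delta + \delta(\delta-1) + \cdots + \delta(\delta-1)^{k-1} \geq (\delta-1)^k$, where $\delta = \delta(H)$.

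Since this ball is contained in $V(H)$, we conclude $(\delta-1)^k \leq n$, and so $\delta = O(n^{1/k})$; hence the average degree $d$ of $G$ satisfies $d \leq 2\delta = O(n^{1/k})$, and therefore $|E(G)| = nd/2 = O(n^{1+1/k}) = O(n^{1+2/(g-1)})$, as required.

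The main subtlety, which I expect to be the technical heart of the argument (and likely the reason the bound is cited from \cite{DB91} rather than proved directly), is justifying the BFS injectivity using only an odd-girth hypothesis. A priori the cycle produced by two merging BFS paths could be even, and short even cycles are not excluded by the assumption. The standard remedies are either to combine an offending short even cycle with additional BFS structure to extract a short odd cycle, or to lift the argument to the bipartite double cover of $G$, where the odd girth of $G$ becomes ordinary girth and the Moore-style counting goes through unchanged.
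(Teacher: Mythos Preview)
The paper does not give a proof of this lemma; it is quoted verbatim from Dutton and Brigham \cite{DB91} and used as a black box in the (also unproved) Lemma~\ref{general_bound}. So there is no argument in the paper to compare against.

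Your Moore-bound argument is the standard one and is essentially correct, but the difficulty you flag in your last paragraph stems from a misreading of the hypothesis. Here ``odd girth $g$'' means that the girth of the graph equals $g$ and $g$ happens to be odd; it does \emph{not} mean that the odd-girth parameter (length of the shortest odd cycle) equals~$g$. This is consistent with the title of \cite{DB91} (``Edges in graphs with large girth'') and with the form of the bound, which is exactly the Moore bound for odd~$g$. Under the correct reading, \emph{every} cycle of length at most $2k=g-1$ is forbidden, so your BFS-injectivity step goes through immediately and no separate treatment of even cycles is needed.

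Under the reading you adopted, the statement is in fact false, and neither of your proposed remedies can save it. For a counterexample, take $K_{m,m}$ and attach a path of $2k-2$ new vertices joining two vertices $a_1,a_2$ on the same side; every odd cycle must traverse the entire path and then return through the bipartite part, giving odd girth exactly $2k+1$, yet the graph has $\Theta(n^2)$ edges. The bipartite-double-cover trick does not help either: a $4$-cycle in $G$ lifts to $4$-cycles in $G\times K_2$, so short even cycles survive and the double cover need not have large girth.

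In summary: drop the last paragraph, read the hypothesis as ``girth $g$ with $g$ odd,'' and your sketch is a complete and correct proof of the cited bound.
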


\begin{lemma} \label{general_bound}
For any integer $g$,
the total cost of any equilibrium graph $G_s$
is at most $\alpha \, O(n^{1+2/g}) + g \sum_{u,v \in
V(G_s)}d_{G_s}(u,v)$.
\end{lemma}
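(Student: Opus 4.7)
The plan is to split $E_s$ into two sets of edges and bound each separately. Let $H$ be the spanning subgraph of $G_s$ consisting of those edges that do \emph{not} lie on any odd cycle of length at most $g$ in $G_s$, and let $B = E_s \setminus E(H)$ be the remaining ``short-cycle'' edges. I would first observe that $H$ itself has odd girth strictly greater than $g$: any odd cycle of length $\le g$ in $H$ would also be an odd cycle of length $\le g$ in $G_s$, so none of its edges could have been placed in $H$, a contradiction. Lemma~\ref{sparse}, applied to the shortest odd cycle length in $H$ (which is at least $g+1$), then yields $|E(H)| = O(n^{1+2/g})$.

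Next I would bound $|B|$ in terms of the path cost by a per-edge argument. For $e = \{u,v\} \in B$, the odd cycle of length at most $g$ through $e$ witnesses a $u$-to-$v$ path of length at most $g-1$ in $G_s \setminus \{e\}$, so for every ordered pair $(x,y)$ whose chosen shortest path uses $e$ we have $d_{G_s \setminus \{e\}}(x,y) \le d_{G_s}(x,y) + (g-2)$, while pairs that avoid $e$ see no change. Writing $w(e)$ for the number of ordered pairs using $e$, deleting $e$ raises $\sum_{u,v} d_{G_s}(u,v)$ by at most $(g-2)\,w(e)$.

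For the matching lower bound I would invoke the collaborative-equilibrium condition \emph{one positive payer at a time}. Fix $e \in E_s$ and any $i$ with $s(i,e) > 0$; take $C = \{i\}$ and let $s'$ set $s'(i,e) = 0$. Because the equilibrium sum is exactly $\alpha$, the new sum drops below $\alpha$ and the edge vanishes. The equilibrium condition forces $c_i(s') > c_i(s)$, which rewrites as $\Delta_i > s(i,e)$, where $\Delta_i$ is the induced increase in $\Dist_{G_s}(i)$. Summing over all positive payers of $e$ yields $\sum_i \Delta_i > \alpha$, and since $\sum_i \Delta_i$ equals the total increase in $\sum_{u,v} d_{G_s}(u,v)$, we conclude $(g-2)\,w(e) \ge \alpha$ for every $e \in B$. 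Summing over $B$ and using the identity $\sum_{e \in E_s} w(e) = \sum_{u,v} d_{G_s}(u,v)$ gives $|B| \le (g-2)\sum_{u,v} d_{G_s}(u,v)/\alpha$. Combining with the $O(n^{1+2/g})$ bound on $|E(H)|$ and adding the path cost delivers the claimed inequality (the cases $g \le 2$ are trivial since the edge cost is at most $\alpha\binom{n}{2}$).

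I expect the main obstacle to be this equilibrium step: the definition of collaborative equilibrium only guarantees that \emph{some} member of any deviating coalition is strictly worse off, which is too weak to produce a sum-over-payers inequality directly. The resolution is to test the equilibrium with a separate singleton coalition for each positive payer; because $\sum_i s(i,e) = \alpha$ in equilibrium, any single payer defecting already destroys the edge, turning one inequality per payer into the crucial aggregate $\sum_i \Delta_i > \alpha$.
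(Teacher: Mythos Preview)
Your overall strategy---split $E_s$ into edges on short cycles (which you charge to the usage cost via the equilibrium condition) and the remaining edges (which you count by the extremal bound of Lemma~\ref{sparse})---is the natural route and almost certainly what the paper intends, since Lemma~\ref{sparse} is stated precisely to feed into this argument. The per-payer singleton-coalition trick you isolate at the end is also exactly the right way to extract an aggregate inequality from the collaborative-equilibrium definition.

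There is, however, a real gap in how you invoke Lemma~\ref{sparse}. That lemma is the Moore bound for graphs whose \emph{girth} equals the odd integer~$g$; it is not a statement about ``odd girth'' in the sense of shortest odd cycle. The latter reading is simply false: $K_{n/2,n/2}$ has no odd cycles at all yet $\Theta(n^2)$ edges. Your subgraph $H$, defined by excluding only edges on short \emph{odd} cycles of $G_s$, may still contain arbitrarily short even cycles and hence be dense, so the conclusion $|E(H)|=O(n^{1+2/g})$ does not follow. The fix is easy and fits your template perfectly: let $B$ consist of the edges lying on \emph{any} cycle of length at most $g{+}1$ in $G_s$. Then $H=E_s\setminus B$ has girth at least $g{+}2$, and Lemma~\ref{sparse} (applied with the odd parameter $g{+}1$ or $g{+}2$ according to the parity of $g$) gives $|E(H)|=O(n^{1+2/g})$. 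For $e\in B$ the detour now has length at most $g$, so deleting $e$ raises each affected distance by at most $g{-}1$; your equilibrium argument then yields $\alpha\le (g{-}1)\,w(e)$, and summing gives $\alpha\,|B|\le (g{-}1)\sum_{u,v}d_{G_s}(u,v)$, which combines with the creation cost of $H$ and the usage cost to produce exactly $\alpha\,O(n^{1+2/g})+g\sum_{u,v}d_{G_s}(u,v)$.

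One small slip that does not affect correctness: the sum $\sum_i\Delta_i$ over positive payers is only a \emph{lower bound} on the total increase in $\sum_{u,v}d_{G_s}(u,v)$, not an equality (non-payers also see their distances go up). The inequality runs in the direction you need, so the conclusion stands.
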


\section{Cooperative Version in Complete Graphs}
\label{Cooperative Version in Complete Graphs}

In this section, we study the price of anarchy when any number of players
can cooperate to create any link, and the host graph is the complete graph.

We start with two lemmata that hold
for both the unilateral and cooperative versions of the problem.
The first lemma bounds a kind of ``doubling radius'' of large neighborhoods
around any vertex, which the second lemma uses to bound the \pathcost.

\begin{lemma} \label{2k+2alpha/n}
  {\rm \cite[Lemma~4]{PODC07}}
  For any vertex $u$ in an equilibrium graph $G_s$,
  if $|N_k(u)| > n/2$, then $|N_{2k+2 \alpha / n}(u)| \geq n$.
\end{lemma}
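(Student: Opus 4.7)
The plan is to run the standard one-edge deviation argument from~\cite{PODC07} in the complete host graph; it applies uniformly to both the unilateral and cooperative models. The intuition is that if some vertex $v$ is far from $u$ while more than half of the vertices are close to $u$, then buying the direct link $\{u,v\}$ provides $v$ with simultaneous shortcuts to every vertex of $N_k(u)$, a total saving that must outweigh the $\alpha$ creation cost unless $v$ is already close to~$u$.

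First I would assume for contradiction that some vertex $v$ has $D := d_{G_s}(u,v) > 2k + 2\alpha/n$ and consider the deviation in which $v$ alone purchases the edge $\{u,v\}$ (available because the host graph is complete). In the unilateral model this is simply an addition to $s_v$; in the cooperative model it is a legal collaborative move with coalition $C = \{v\}$ that sets $s(v,\{u,v\}) = \alpha$ and leaves all other $s(\cdot,\cdot)$ unchanged. Consequently a single calculation handles both settings.

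Second I would lower-bound the drop in $v$'s \pathcost\ caused by this deviation. For every $w \in N_k(u)$, the new distance from $v$ to $w$ is at most $1 + d_{G_s}(u,w) \le 1 + k$, while the triangle inequality in $G_s$ gives $d_{G_s}(v,w) \ge D - d_{G_s}(u,w) \ge D - k$. Hence each such $w$ contributes at least $D - 2k - 1$ to $v$'s savings, for a total strictly greater than $(n/2)(D - 2k - 1)$. Equilibrium forces this saving to be at most the extra creation cost $\alpha$, yielding $D \le 2k + 1 + 2\alpha/n$, which matches the claimed bound up to an off-by-one constant.

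The main obstacle I anticipate is purely bookkeeping: matching the exact constant $2k + 2\alpha/n$ stated in the lemma rather than the loose $2k + 1 + 2\alpha/n$ that drops out of the above. This will be handled either by sharpening the contribution of the $w = u$ term (which alone saves $D - 1$ rather than $D - 2k - 1$), or by exploiting the strict inequality $|N_k(u)| > n/2$ to absorb the extra unit into the sum. Since the lemma is quoted from~\cite{PODC07}, I would simply cite their precise constant accounting rather than redo it from scratch.
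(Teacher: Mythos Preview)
The paper does not prove this lemma; it is quoted directly from \cite[Lemma~4]{PODC07} with no argument supplied. Your single-edge deviation argument (have the distant vertex $v$ buy $\{u,v\}$ and count its savings of at least $D-2k-1$ over each of the more than $n/2$ vertices in $N_k(u)$) is exactly the proof in \cite{PODC07}, and your proposed resolution of the off-by-one---simply cite \cite{PODC07} for the precise constant---is precisely what the present paper does.
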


\begin{lemma} \label{N_k>n/2}
If we have $N_k(u) > n/2$ for some vertex $u$ in an
equilibrium graph $G_s$, the \pathcost\ is at most $O(n^2k+\alpha n)$.
\end{lemma}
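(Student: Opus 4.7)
The plan is to combine Lemma~\ref{2k+2alpha/n} with a simple triangle-inequality argument, which is why I expect this to be an easy corollary rather than something with a hidden subtlety.

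First I would apply Lemma~\ref{2k+2alpha/n} directly: since $|N_k(u)| > n/2$, every vertex of $G_s$ lies within distance $r := 2k + 2\alpha/n$ of $u$. In particular, $\Dist_{G_s}(u) = \sum_{v} d_{G_s}(u,v) \leq n\,r = 2nk + 2\alpha$.

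Next, for an arbitrary vertex $v$, I would bound $\Dist_{G_s}(v)$ by routing every shortest path through~$u$: by the triangle inequality,
\[
\Dist_{G_s}(v) = \sum_{w} d_{G_s}(v,w) \leq \sum_{w}\bigl(d_{G_s}(v,u) + d_{G_s}(u,w)\bigr) \leq n\,r + \Dist_{G_s}(u) \leq 4nk + 4\alpha.
\]
Summing over all $n$ vertices $v$, the total \pathcost\ is at most $n(4nk + 4\alpha) = O(n^2 k + \alpha n)$, which is the claimed bound.

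The only step that requires any thought is invoking Lemma~\ref{2k+2alpha/n} with the right hypothesis $|N_k(u)| > n/2$ (a strict inequality) in order to conclude that the $(2k+2\alpha/n)$-neighborhood covers \emph{all} $n$ vertices; after that, the argument is completely routine. There is no real obstacle here: the work is absorbed into Lemma~\ref{2k+2alpha/n}, and this lemma is simply packaging its consequence for the total usage cost.
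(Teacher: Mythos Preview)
Your argument is correct and is precisely the intended derivation: apply Lemma~\ref{2k+2alpha/n} to get that every vertex lies within distance $2k+2\alpha/n$ of~$u$, then bound all pairwise distances via the triangle inequality through~$u$. The paper omits the proof of this lemma because it is exactly this routine corollary of Lemma~\ref{2k+2alpha/n}, so there is nothing to add.
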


Next we show how to improve the bound on ``doubling radius'' for
large neighborhoods in the cooperative game:

\begin{lemma} \label{2k+4sqrt(alpha/n)}
  For any vertex $u$ in an equilibrium graph $G_s$,
  if $|N_k(u)| > n/2$, then $|N_{2k+4 \sqrt{\alpha / n}}(u)| \geq n$.
\end{lemma}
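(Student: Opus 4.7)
The plan is to argue by contradiction. Suppose that some vertex $v$ has $D := d_{G_s}(u,v) > 2k + c$ where $c := 4\sqrt{\alpha/n}$. I will exhibit a single-edge cooperative deviation---the purchase of $\{u,v\}$---whose total player-summed cost savings strictly exceed $\alpha$. Once this is established, a coalition of all players can split the $\alpha$ cost so that every member's share lies at or below its own distance savings, contradicting the collaborative-equilibrium property applied to the edge $\{u,v\}$.

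The technical core is a \emph{quadratic-in-$c$} lower bound on these savings, rather than the linear-in-$c$ bound that underlies the unilateral Lemma~\ref{2k+2alpha/n}. I will stratify vertices by BFS layers around $v$, setting $L_j := \{y \in V : d_{G_s}(v,y) = j\}$, and observe that any shortest $u$-$v$ path places at least one vertex in each of $L_0, L_1, \ldots, L_D$, so $|L_j| \ge 1$. For each $w \in N_k(u)$ and $y \in L_j$, the triangle inequality gives $d_{G_s}(w,y) \ge D - k - j$, while the shortcut through $\{u,v\}$ produces a path of length at most $d_{G_s}(w,u) + 1 + d_{G_s}(v,y) \le k + 1 + j$. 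Hence each such pair saves at least $D - 2k - 2j - 1$ in distance. A short check using $D > 2k+c$ rules out the degenerate case $w = y$ whenever $j < (D-2k)/2$.

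Summing these per-pair savings over $j = 0, 1, \ldots, \lfloor (D-2k-1)/2 \rfloor$ (each such $j$ contributing at least one $y \in L_j$) and over the $|N_k(u)| > n/2$ choices of $w$, then doubling to account for both endpoints of each pair, the total player-summed savings is at least $\frac{n(D-2k)^2}{4} > \frac{n c^2}{4} = 4\alpha$, comfortably exceeding $\alpha$.

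The main obstacle---and the only place where cooperation is crucial---is obtaining savings of order $c^2 n$ rather than $cn$. The naive argument using only the pairs $\{w, v\}$ for $w \in N_k(u)$ yields $\Theta(cn)$ total savings, which would only recover the weaker unilateral bound $c = O(\alpha/n)$. The improvement comes from using the shortest $u$-$v$ path as a free ``populator'' of every BFS layer of $v$, so that $\Theta(c)$ different layers each contribute $\Theta(cn)$ worth of savings; stacking these linear-in-$c$ contributions is what produces the required $\Theta(c^2 n)$ bound.
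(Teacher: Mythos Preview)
Your argument is correct and is the natural cooperative-model strengthening of Lemma~\ref{2k+2alpha/n}: you aggregate the distance savings from adding $\{u,v\}$ over all of $N_k(u)$ \emph{and} over $\Theta(D-2k)$ vertices along a shortest $u$--$v$ path (one per BFS layer of $v$), obtaining a bound quadratic in $D-2k$ rather than linear. This is exactly the mechanism that converts the additive $2\alpha/n$ of the unilateral lemma into $4\sqrt{\alpha/n}$, and it matches the intended proof (which the paper omits for space).

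One minor arithmetic point: the inner sum $\sum_{j\ge 0,\ S-2j-1\ge 0}(S-2j-1)$ with $S:=D-2k$ equals $\lfloor S^2/4\rfloor$, not $S^2/4$, so your savings bound is really $n\lfloor S^2/4\rfloor$, which can be as small as $4\alpha-n/4$ rather than $4\alpha$. Your factor-four slack absorbs this whenever $\alpha>n/12$. For smaller $\alpha$ you have $4\sqrt{\alpha/n}<2$, so either $S\ge 2$ and $n\lfloor S^2/4\rfloor\ge n>\alpha$, or $S=1$, in which case $4\sqrt{\alpha/n}<1$ and the desired conclusion $|N_{2k}(u)|=n$ does not follow from your count. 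The cleanest patch is to observe that for $\alpha\le 4n$ one has $4\sqrt{\alpha/n}\ge 2\alpha/n$, so Lemma~\ref{2k+2alpha/n} already yields the statement in that range; your argument then handles $\alpha>4n$ (where the corner case cannot arise) with room to spare.
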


Next we consider what happens with arbitrary neighborhoods,
using techniques similar to \cite[Lemma~5]{PODC07}.

\begin{lemma} \label{4k+3}
  If $|N_k(u)| \geq Y$ for every vertex $u$ in an equilibrium graph~$G_s$,
  then either $|N_{4k+2}(u)| > n/2$ for some vertex $u$
  or $|N_{5k+3}(u)| \geq Y^2 n/\alpha$ for every vertex~$u$.
\end{lemma}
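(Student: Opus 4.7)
The plan is to prove the contrapositive of the second alternative: assume every vertex~$u$ satisfies $|N_{4k+2}(u)| \leq n/2$ and fix any vertex~$u_0$, then show $|N_{5k+3}(u_0)| \geq Y^2 n / \alpha$ by bounding the complement $F := V \setminus N_{5k+3}(u_0)$ via a coalitional deviation that purchases a single edge out of~$u_0$.

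For each $v \in F$ the distance $d_{G_s}(u_0, v) \geq 5k+4 > 2k$ forces $N_k(u_0)$ and $N_k(v)$ to be disjoint. Consider the coalition $C = N_k(u_0) \cup N_k(v)$ collectively purchasing the edge $\{u_0, v\}$. Every ordered pair $(w, x) \in N_k(u_0) \times N_k(v)$ drops from current distance $\geq d_{G_s}(u_0, v) - 2k \geq 3k+4$ down to at most $k + 1 + k = 2k + 1$, a saving of $\geq k+3$ per pair, so the total benefit $T(v)$ summed over all players is at least $2 Y^2 (k+3)$. By the collaborative-equilibrium condition $T(v) \leq \alpha$, since otherwise the coalition could distribute the cost~$\alpha$ proportionally to individual benefits so that every contributing member strictly improves, contradicting equilibrium.

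In the easy regime $2 Y^2 (k+3) > \alpha$, this single-edge inequality already forces $F = \emptyset$, yielding $|N_{5k+3}(u_0)| = n \geq Y^2 n/\alpha$. For the remaining regime I aggregate by summing $T(v) \leq \alpha$ over all $v \in V$, giving $\sum_v T(v) \leq n \alpha$, and I lower-bound the sum using: (i)~the refined per-edge saving estimate $T(v) \geq 2 Y^2\,(d_{G_s}(u_0, v) - 4k - 1)^+$, valid whenever $d_{G_s}(u_0, v) > 2k$; (ii)~the assumption $|N_{4k+2}(u_0)| \leq n/2$, which guarantees at least $n/2$ vertices at distance $\geq 4k+3$ from~$u_0$, each contributing~$\geq 2$; and (iii)~the $|F|$ vertices of~$F$, each contributing~$\geq k+3$. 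Rearranging the resulting inequality provides a bound on $|F|$ that translates to the desired lower bound on $|N_{5k+3}(u_0)|$.

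The main obstacle will be calibrating the constants so that the final bound matches $Y^2 n / \alpha$ across the full parameter range, and in particular in the regime $Y^2 \ll \alpha/k$ where the simple coalitional bound on $T(v)$ is loose. I expect this to require strengthening the lower bound on $T(v)$ by also counting savings from pairs $(w, x)$ with $x$ outside $N_k(v)$ but still closer to~$v$ than to~$u_0$ via the new shortcut; double-counting these savings over~$v$ using the hypothesis $|N_k(x)| \geq Y$ at intermediate vertices~$x$ is what should introduce the extra factor of~$n$ arising from cooperative cost-sharing, which is precisely what distinguishes this bound from its unilateral analogue in~\cite{PODC07}.
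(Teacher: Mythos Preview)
Your aggregation step does not close. Summing $T(v)\le\alpha$ over all $v$ and plugging in your lower bound $T(v)\ge 2Y^2\bigl(d_{G_s}(u_0,v)-4k-1\bigr)^+$ yields, after the arithmetic you outline, only
\[
|F|\ \le\ \frac{n\alpha}{2Y^2(k+1)},
\qquad\text{hence}\qquad
|N_{5k+3}(u_0)|\ \ge\ n-\frac{n\alpha}{2Y^2(k+1)}.
\]
This is useful only when $Y^2(k+1)\gtrsim\alpha$; in the regime $Y^2k\ll\alpha$ (precisely the one you flag as difficult) the right-hand side is negative and says nothing, whereas the target $Y^2n/\alpha$ is then a small positive number. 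The loss comes from double-counting: as $v$ ranges over $V$, the balls $N_k(v)$ overlap enormously, so the sum $\sum_v T(v)$ is a gross overcount of the actual aggregate savings any family of single-edge deviations could realise. Your proposed repair --- enlarging each $T(v)$ by pairs $(w,x)$ with $x$ farther from $v$ and then ``double-counting via $|N_k(x)|\ge Y$'' --- does not remove this overlap; it only makes each term larger while leaving the sum over all $v$ just as redundant.

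The paper's argument avoids the overlap altogether by a packing construction rather than a complement bound. It works at the fixed shell $S=\{x:d_{G_s}(u,x)=4k+3\}$ and greedily extracts center points $x_1,\dots,x_l\in S$ that are pairwise more than $2k$ apart. Every far vertex (distance $>4k+2$ from $u$) is assigned to exactly one center via a shortest path, giving a partition $T_1,\dots,T_l$ of the far half of the graph. For the single edge $\{u,x_i\}$ the coalition is $T_i$, each of whose members gains at least $2$ toward every vertex of $N_k(u)$; equilibrium gives $\alpha\ge 2|T_i|\,|N_k(u)|$, and summing over $i$ (now with no overlap, since the $T_i$ partition) yields $l\ge n|N_k(u)|/\alpha\ge nY/\alpha$. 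Finally, because the centers are $>2k$ apart, their $k$-balls are disjoint and all sit inside $N_{5k+3}(u)$, giving $|N_{5k+3}(u)|\ge lY\ge Y^2n/\alpha$. The two ideas you are missing are (i) working with a \emph{sparse, well-separated} set of deviation endpoints so that the associated coalitions partition rather than overlap, and (ii) reading off $|N_{5k+3}(u)|$ directly from a disjoint-ball packing around those endpoints instead of bounding the complement.
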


\begin{proof}
  If there is a vertex $u$ with $|N_{4k+2}(u)| > n/2$, then the
  claim is obvious.  Otherwise, for every vertex~$u$, $|N_{4k+2}(u)|
  \leq n/2$.  Let $u$ be an arbitrary vertex.  Let $S$ be the set of
  vertices whose distance from $u$ is $4k+3$.
  We select a subset of $S$, called \emph{center points},
  by the following greedy algorithm.
  %First we unmark all vertices in~$S$.
  We repeatedly select an unmarked vertex $z \in S$ as a center point,
  mark all unmarked vertices in $S$ whose distance from $z$ is at
  most $2 k$, and assign these vertices to~$z$.

\begin{wrapfigure}{r}{3in}
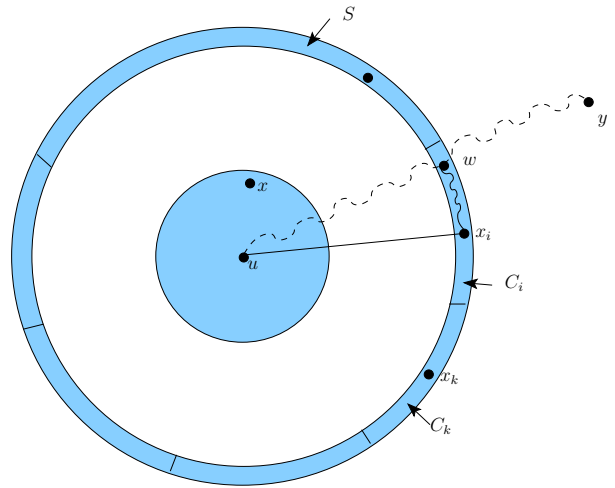

\centering
\scalebox{0.6}{\ifpdf\input{lemma8_1.pdf_t}\else\input{lemma8_1.pstex_t}\fi}
\caption{Center points.}
\label{fig:02}
\vspace{-2ex}
\end{wrapfigure}

  Suppose that we select $l$ vertices $x_1, x_2, \dots, x_l$ as center points. 
  We prove that $l \geq |N_k(u)| n/\alpha$.
  Let $C_i$ be the vertices in $S$ assigned to~$x_i$; see
  %The center points and sets $C_1, C_2, ..,C_l$ are shown in
  Figure~\ref{fig:02}. 
  By construction, $S = \bigcup_{i=1}^l C_i$.
  We also assign each vertex $v$ at distance at least $4 k + 4$ from $u$
  to one of these center points, as follows.
  Pick any one shortest path from $v$ to~$u$
  that contains some vertex $w \in S$,
  and assign $v$ to the same center point as~$w$. This vertex $w$ is
  unique in this path because this path is a shortest path from $v$ to~$u$.
  Let $T_i$ be the set of vertices assigned to $x_i$
  and whose distance from $u$ is more than $4k+2$.
  By construction, $\bigcup_{i=1}^l T_i$ is the set of vertices
  at distance more than $4k+2$ from~$u$.
  The shortest path from $v \in T_i$ to $u$ uses some vertex $w \in C_i$.
  For any vertex $x$ whose distance is at most $k$ from $u$ and
  for any $y \in T_i$, adding the edge $\{u,x_i\}$ decreases the distance
  between $x$ and $y$ at least~$2$, because
  the shortest path from $y \in T_i$ to $u$ uses some vertex $w \in C_i$,
  as shown in Figure~\ref{fig:02}.
  By adding edge $\{u,x_i\}$, the distance between $u$ and $w$
  would become at most $2k+1$ and the distance between $x$ and $w$
  would become at most $3k+1$, where $x$ is any vertex whose distance
  from $u$ is at most $k$. Because the current distance between $x$
  and $w$ is at least $4k+3-k=3k+3$, adding the edge $\{u,x_i\}$ decreases
  this distance by at least~$2$. Consequently the distance between $x$
  and any $y \in T_i$ decreases by at least~$2$. Note that the distance
  between $x$ and $y$ is at least $d_{G_s}(u,y)-k$, and after adding
  edge $(u,x_i)$, this distance becomes at most
  $3k+1 + d_{G_s}(w,y) = 3k+1 + d_{G_s}(u,y) - d_{G_s}(u,w)
                       = 3k+1 + d_{G_s}(u,y) - (4k+3)
                       = d_{G_s}(u,y) - k - 2$.

  Thus any vertex $y \in T_i$ has
  incentive to pay at least $2 \, |N_k(u)|$ for edge $\{u,x_i\}$.
  Because the edge $\{u,x_i\}$ is not in equilibrium, we conclude that
  $\alpha \geq 2 |T_i| |N_k(u)|$.  On the other hand, $|N_{4k+2}(u)| \leq n/2$,
  so $\sum_{i=1}^l |T_i| \geq n/2$.  Therefore, $l \, \alpha \geq
  2 |N_k(u)| \sum_{i=1}^l |T_i| \geq n |N_k(u)|$ and hence $l \geq n |N_k(u)|/\alpha$.

  According to the greedy algorithm, the distance between any pair of
  center points is more than~$2 k$;
  hence, $N_k(x_i) \cap N_k(x_j) = \emptyset$ for $i \neq j$.
  By the hypothesis of the lemma, $|N_k(x_i)| \geq Y$ for every vertex~$x_i$;
  hence $|\bigcup_{i=1}^l N_k(x_i)| = \sum_{i=1}^l |N_k(x_i)| \geq l \, Y$.
  For every $i \leq l$, we have $d_{G_s}(u,x_i)=4k+3$, so vertex $u$
  has a path of length at most $5k+3$ to every vertex whose distance to $x_i$
  is at most~$k$.
  Therefore, $|N_{5k+3}(u)| \geq |\bigcup_{i=1}^l N_k(x_i)| \geq
  l \, Y \geq Y n |N_k(u)|/\alpha \geq Y^2 n/\alpha$.
\end{proof}

Now we are ready to prove bounds on the price of anarchy.
We start with the case when $\alpha$ is a bit smaller than~$n$:

\begin{theorem} \label{1/epsilon^3}
  For $1 \leq \alpha < n^{1-\epsilon}$,
  the price of anarchy is at most $O(1/\epsilon^{1+\lg 5})$.
\end{theorem}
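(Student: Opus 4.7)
The plan is to grow neighborhoods iteratively via Lemma~\ref{4k+3} until some vertex controls more than half of the graph, then to bound the usage cost via Lemma~\ref{N_k>n/2} and convert this into a total-cost bound through a careful choice of girth parameter in Lemma~\ref{general_bound}.

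First I would note that an equilibrium graph must be connected (otherwise some pair has infinite distance, contradicting equilibrium), so $|N_1(u)| \geq 2$ for every vertex $u$. Starting from $k_0 = 1$ and $Y_0 = 2$, I apply Lemma~\ref{4k+3} repeatedly: at stage $i$, either some vertex $u$ has $|N_{4k_i + 2}(u)| > n/2$ (the ``target'' case), or we may update $k_{i+1} = 5 k_i + 3$ and $Y_{i+1} = Y_i^2\, n/\alpha$. Solving these recurrences yields $k_i = O(5^i)$ and $Y_i = 2^{2^i}(n/\alpha)^{2^i - 1}$. Because $\alpha < n^{1-\epsilon}$ implies $n/\alpha > n^\epsilon$, a short calculation shows that $Y_i > n/2$ as soon as $2^i \geq 1 + 1/\epsilon$, hence within $i^* = O(\lg(1/\epsilon))$ iterations. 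Once $Y_{i^*} > n/2$, the hypothesis $|N_{k_{i^*}}(u)| \geq Y_{i^*}$ already certifies the target case (with radius $k_{i^*}$ in place of $4 k_{i^*}+2$), so in either scenario I obtain some vertex whose neighborhood of radius $k = O(5^{i^*}) = O((1/\epsilon)^{\lg 5})$ contains more than $n/2$ vertices.

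With this established, Lemma~\ref{N_k>n/2} bounds the usage cost by $O(n^2 k + \alpha n) = O(n^2 (1/\epsilon)^{\lg 5})$, using $\alpha < n$. To bound the total cost, I apply Lemma~\ref{general_bound} with $g = \lceil 2/\epsilon \rceil$: the edge term becomes $\alpha \cdot O(n^{1 + 2/g}) \leq n^{1-\epsilon}\cdot n^{1+\epsilon} = n^2$, while the scaled usage term is $g$ times the pathcost, namely $O((1/\epsilon)\cdot n^2 (1/\epsilon)^{\lg 5}) = O(n^2 (1/\epsilon)^{1+\lg 5})$. The total equilibrium cost is therefore $O(n^2 (1/\epsilon)^{1+\lg 5})$, while the social optimum is $\Omega(n^2)$ since every ordered pair of distinct vertices contributes at least $1$ to the usage cost. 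Dividing gives the claimed bound.

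The main obstacle I expect is ensuring the iteration really terminates in $O(\lg(1/\epsilon))$ steps independently of $n$: this depends on the doubly-exponential blow-up in $Y_i$ being rescued by the hypothesis $n/\alpha > n^\epsilon$, which is injected into $Y_i$ at every step. A secondary delicate point is the choice $g \asymp 1/\epsilon$ in Lemma~\ref{general_bound}: it is exactly the balance between the edge-cost term $\alpha\,n^{1+2/g}$ and the scaled usage term $g\cdot \mathrm{pathcost}$, and the extra factor of $g$ is what lifts the exponent from $\lg 5$ to $1+\lg 5$ in the final price-of-anarchy bound.
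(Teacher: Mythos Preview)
Your proposal is correct and follows essentially the same approach that the paper takes: iterate Lemma~\ref{4k+3} starting from the trivial bound $|N_1(u)|\geq 2$, use the doubly-exponential growth of $Y_i$ together with $n/\alpha>n^\epsilon$ to terminate in $O(\lg(1/\epsilon))$ rounds with a radius $k=O((1/\epsilon)^{\lg 5})$, then bound the usage cost via Lemma~\ref{N_k>n/2} and the total cost via Lemma~\ref{general_bound} with $g\approx 2/\epsilon$. The paper's proof of Theorem~\ref{log(n)^3} explicitly says it is ``similar to the proof of Theorem~\ref{1/epsilon^3}'' and carries out exactly this iteration (with starting radius $a_1=\max\{2,2\alpha/n\}+1$ rather than $1$, a cosmetic difference), so your reconstruction matches the intended argument.
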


Next we prove a polylogarithmic bound on the price of anarchy
when $\alpha$ is close to~$n$.

\begin{theorem} \label{log(n)^3}
  For $\alpha = O(n)$, the price of anarchy is $O(\lg^{1+\lg 5} n)$
  and the diameter of any equilibrium graph is $O(\lg^{\lg 5} n)$.
\end{theorem}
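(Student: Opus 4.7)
The plan is to bound the diameter $D$ of every collaborative equilibrium $G_s$ and then convert this into a price-of-anarchy bound via Lemma~\ref{general_bound}. First I would observe that any equilibrium must be connected -- otherwise some vertex $u$ has infinite cost, which the singleton coalition $\{u\}$ can reduce to a finite value by unilaterally purchasing an edge to any vertex in a different component -- so $|N_1(u)| \geq 2$ for every $u$. This gives a base case $(k_0, Y_0) = (1, 2)$ for iterating Lemma~\ref{4k+3}: as long as no vertex satisfies $|N_{4k_i+2}(u)| > n/2$, one obtains $|N_{k_{i+1}}(u)| \geq Y_{i+1}$ for every $u$ with $k_{i+1} = 5k_i + 3$ and $Y_{i+1} = Y_i^2 \cdot (n/\alpha)$.

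The first recurrence gives $k_i = O(5^i)$. Writing $\gamma = n/\alpha$ and $W_i = \gamma Y_i$, the second linearizes to $W_{i+1} = W_i^2$, so $W_i = (2\gamma)^{2^i}$. Since $\alpha = O(n)$, $\gamma = \Omega(1)$; whenever $2\gamma > 1$ (i.e., $\alpha < 2n$) $W_i$ grows doubly-exponentially, and $Y_i$ reaches $n/2$ after $i^* = O(\lg \lg n)$ steps. At that point the iteration must terminate, yielding a vertex $u$ with $|N_{4k_{i^*}+2}(u)| > n/2$. Applying Lemma~\ref{2k+4sqrt(alpha/n)} at this $u$ gives $|N_{8k_{i^*}+4+4\sqrt{\alpha/n}}(u)| = n$, so the eccentricity of $u$ is $O(5^{i^*}) + O(1) = O(\lg^{\lg 5} n)$, and hence $D = O(\lg^{\lg 5} n)$. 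For the narrow boundary $\alpha \in [2n, O(n)]$ where $2\gamma \leq 1$, a short bootstrap argument raising $Y_0$ just above $\alpha/n$ (via a structural observation about vertices with very small $1$-neighborhoods) restarts the doubly-exponential growth.

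For the price of anarchy, I apply Lemma~\ref{general_bound} with $g = \lceil \lg n \rceil$. The total usage cost is at most $D n^2 = O(n^2 \lg^{\lg 5} n)$, so
\[
  c(s) \leq \alpha \cdot O(n^{1+2/\lg n}) + \lg n \cdot O(n^2 \lg^{\lg 5} n) = O(\alpha n + n^2 \lg^{1+\lg 5} n) = O(n^2 \lg^{1+\lg 5} n),
\]
using $n^{2/\lg n} = 4$ and $\alpha n = O(n^2)$. The social optimum is $\Omega(n^2)$ because each of the $n(n-1)$ ordered pairs of distinct vertices contributes at least $1$ to the usage cost, so the price of anarchy is $O(\lg^{1+\lg 5} n)$.

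The heart of the argument -- and the expected main obstacle -- is setting up the iteration so that $W_i$ escapes the trivial regime $W_i \leq 1$. Once $W_i > 1$, doubly-exponential growth of $Y_i$ takes over and the recurrence $k_i = O(5^i)$ produces the $\lg^{\lg 5} n$ diameter cleanly; the delicate case is $\alpha \approx 2n$, where the naive base case $Y_0 = 2$ just fails to satisfy $Y_0 > \alpha/n$, so a short bootstrap is needed there to kick off growth.
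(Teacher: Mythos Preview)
Your approach is essentially the paper's: iterate Lemma~\ref{4k+3} until some ball exceeds $n/2$, then finish with a doubling lemma and Lemma~\ref{general_bound} at $g=\lg n$. The only substantive difference is the seed of the iteration, and here your plan has a small but real gap. You start at $(k_0,Y_0)=(1,2)$ and then need a separate ``bootstrap'' for $\alpha\in[2n,O(n)]$, which you leave as an unspecified ``structural observation about vertices with very small $1$-neighborhoods''; it is not clear what that observation is or why it raises $Y_0$ above $\alpha/n$.

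The paper avoids this entirely by choosing the seed radius itself to depend on $\alpha$: set $a_1=\max\{2,\,2\alpha/n\}+1$. In a connected graph, $|N_r(u)|\geq r+1$ (as long as $r<n$), so $N_{a_1}>2\max\{1,\alpha/n\}$ immediately. In your notation this gives $W_1=(n/\alpha)Y_1>2$ for \emph{all} $\alpha=O(n)$, so the doubly-exponential growth $W_i=W_1^{2^{i-1}}$ kicks in at once and terminates in $j<\lg\lg n+1$ steps. Because $\alpha=O(n)$, the enlarged seed radius $a_1$ is still $O(1)$, so the $O(5^{j})=O(\lg^{\lg 5}n)$ diameter bound is unaffected. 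Replace your vague bootstrap by this one-line seeding trick and the argument goes through. (Either Lemma~\ref{2k+2alpha/n} or Lemma~\ref{2k+4sqrt(alpha/n)} suffices for the final step, since both corrections are $O(1)$ when $\alpha=O(n)$; the paper happens to use the former.)
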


\begin{proof}
  Consider an equilibrium graph $G_s$.
  The proof is similar to the proof of Theorem~\ref{1/epsilon^3}.
  Define $a_1=\max\{2 , 2 \alpha /n\}+1$ and $a_i=5 a_{i-1}+3$,
  or equivalently $a_i=\frac{4a_1+3}{20} \cdot 5^{i}- {3 \over 4} < a_1 5^i$,
  for all $i>1$.
  %Let $N_k=\min_{v \in V(G_s)}N_k(v)$.
  By Lemma~\ref{4k+3}, for each $i \geq 1$, either $N_{4 a_i+2}(v) > n/2$ for
  some vertex $v$ or $N_{a_{i+1}} \geq (n/\alpha) \, N_{a_i}^2$.
  Let $j$ be the least number for which $|N_{4 a_j+2}(v)| > n/2$ for some
  vertex~$v$.
  By this definition, for each $i < j$,
  $N_{a_{i+1}} \geq (n/\alpha) \, N_{a_i}^2$.
  Because $N_{a_1} > 2 \max\{1,\alpha/n\}$, we obtain that $N_{a_i} > 2^{2^{i-1}}\max\{1,\alpha/n\}$
  for every $i \leq j$.
  On the other hand, $2^{2^{j-1}} \leq 2^{2^{j-1}}\max\{1,\alpha/n\} < N_{a_j} \leq n$,
  so $j < \lg\lg n+1$ and $a_j < a_1 \, 5^{\lg\lg n+1} <
  (2+2\alpha/n+1+1)5\lg^{\lg 5} n = 10(2+\alpha/n)\lg^{\lg 5} n$. Therefore
  $N_{4 \cdot[10(2+\alpha/n)\lg^{\lg 5} n]+2}(v) > n/2$
  for some vertex $v$ and using Lemma
  \ref{2k+2alpha/n}, we conclude that the distance of $v$ to all other vertices
  is at most $2[40(2+\alpha/n)\lg^{\lg 5} n + 2] + 2\alpha/n$. Thus the diameter of $G_s$
  is at most $O((1+\alpha/n)\lg^{\lg 5} n)$.
  Setting $g= \lg n$ in Lemma~\ref{general_bound}, the cost of $G_s$
  is at most $\alpha \, O(n) + (\lg n) O(n^2(1+\alpha/n)\lg^{\lg 5} n)=O((\alpha n +
  n^2)\lg^{1+\lg 5} n)$. Therefore the price of anarchy is at most $O(\lg^{1+\lg 5} n)$.
\end{proof}

When $\alpha$ is a bit larger than $n$, we can obtain a constant bound
on the price of anarchy.  First we need a somewhat stronger result on
the behavior of neighborhoods:

\begin{lemma} \label{5k+1}
  If $|N_k(u)| \geq Y$ for every vertex $u$ in an equilibrium graph~$G_s$,
  then either $|N_{5k}(u)| > n/2$ for some vertex $u$
  or $|N_{6k+1}(u)| \geq Y^2 k n/2\alpha$ for every vertex~$u$.
\end{lemma}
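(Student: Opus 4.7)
The plan is to imitate the proof of Lemma~\ref{4k+3}, but push the center points a bit further from~$u$: we replace the distance $4k+3$ by $5k+1$. This trades a slightly larger outer radius ($6k+1$ in place of $5k+3$) for a factor $k/2$ gain in the lower bound, because the per-pair distance saving from adding an edge $\{u,x_i\}$ grows from~$2$ to~$k$. Concretely, if $x_i$ is at distance~$R$ from $u$ and the greedy cluster radius is $2k$, then for $x\in N_k(u)$ and $y$ in the ``cone'' $T_i$ at distance $\geq R$ from $u$, adding $\{u,x_i\}$ reduces $d_{G_s}(x,y)$ from at least $d_{G_s}(u,y)-k$ to at most $k+1+2k+(d_{G_s}(u,y)-R)$, giving a saving of $R-4k-1$; taking $R=5k+1$ makes this saving equal to~$k$.

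Assume we are in the nontrivial case, so $|N_{5k}(u)|\leq n/2$ for every vertex~$u$. Fix an arbitrary $u$ and let $S$ be the set of vertices at distance exactly $5k+1$ from~$u$. Run the same greedy procedure as in Lemma~\ref{4k+3}: iteratively pick an unmarked $x_i\in S$ as a center point, mark all unmarked $z\in S$ with $d_{G_s}(x_i,z)\leq 2k$, and call the marked set $C_i$. Extend the assignment to every $v$ with $d_{G_s}(v,u)\geq 5k+2$ by choosing a shortest $v$--$u$ path, observing that it meets $S$ in a unique vertex~$w$, and sending $v$ to the center point of $w$. Let $T_i$ be the set of all vertices assigned to $x_i$ whose distance from $u$ is at least $5k+1$ (so $C_i\subseteq T_i$). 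Then $\bigcup_i T_i$ is exactly the set of vertices at distance more than $5k$ from~$u$, so $\sum_i|T_i|\geq n-|N_{5k}(u)|\geq n/2$.

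Next, verify the $k$-saving for $x\in N_k(u)$ and $y\in T_i$. If $y\in C_i$, take $w=y$; otherwise let $w\in C_i$ be the unique vertex of $S$ on a shortest $y$--$u$ path. In both cases $d_{G_s}(w,y)=d_{G_s}(u,y)-(5k+1)$ and $d_{G_s}(x_i,w)\leq 2k$. The original distance satisfies $d_{G_s}(x,y)\geq d_{G_s}(u,y)-k$, while after adding $\{u,x_i\}$ the new $x$-to-$y$ distance is at most $d_{G_s}(x,u)+1+d_{G_s}(x_i,w)+d_{G_s}(w,y)\leq k+1+2k+d_{G_s}(u,y)-(5k+1)=d_{G_s}(u,y)-2k$, a saving of at least~$k$. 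Thus the coalition $T_i$ would save at least $k\,|T_i|\,|N_k(u)|\geq kY|T_i|$ by collectively buying $\{u,x_i\}$, and the collaborative equilibrium condition forces $\alpha\geq kY|T_i|$. Summing yields $l\alpha\geq kY\sum_i|T_i|\geq kYn/2$, so $l\geq kYn/(2\alpha)$. The balls $N_k(x_i)$ are pairwise disjoint (centers are more than $2k$ apart) and each lies in $N_{6k+1}(u)$ because $d_{G_s}(u,x_i)=5k+1$; combined with the hypothesis $|N_k(x_i)|\geq Y$, this gives $|N_{6k+1}(u)|\geq lY\geq Y^2kn/(2\alpha)$.

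The main obstacle is aligning three parameters at once: $R=5k+1$ has to be small enough that the resulting outer radius $R+k$ matches the claimed $6k+1$, large enough that the per-pair saving $R-4k-1$ reaches~$k$, and consistent with the dichotomy condition on $|N_{5k}(u)|$ (so that $T_i$ can be defined to start at distance $R$, giving $\sum|T_i|\geq n/2$). The values $5k+1$ and $6k+1$ are exactly the unique choices that reconcile all three constraints; any shift in one direction destroys the target bound. The only other subtlety is the degenerate case $y\in C_i$ (in particular $y=x_i$), where $w=y$ and $d_{G_s}(w,y)=0$, which needs a separate check; there the saving is actually $\geq 3k$, safely above~$k$.
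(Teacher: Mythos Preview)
Your argument is correct and is exactly the natural adaptation of the proof of Lemma~\ref{4k+3} that the paper intends; the paper itself omits the proof of Lemma~\ref{5k+1}, but pushing the center-point radius from $4k+3$ to $5k+1$ so that the per-pair saving jumps from $2$ to $R-4k-1=k$ is precisely the mechanism needed to produce the extra factor $k/2$ in the bound, and your bookkeeping ($\sum_i|T_i|\geq n-|N_{5k}(u)|\geq n/2$, disjointness of the $N_k(x_i)$, containment in $N_{6k+1}(u)$) is all in order. One inconsequential slip: in the degenerate case you flag, the saving is $\geq 3k$ only for $y=x_i$; for general $y\in C_i$ the computation gives $\geq (4k+1)-(3k+1)=k$, which is still what you need.
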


\begin{theorem} \label{sqrt(alpha/n)}
  For any $\alpha > n$,
  the price of anarchy is $O(\sqrt{n/\alpha}\lg^{1+\lg 6} n)$
  and the diameter of any equilibrium graph is
  $O(\lg^{\lg 6} n \cdot \sqrt{\alpha/n})$.
\end{theorem}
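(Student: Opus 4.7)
The plan is to mirror the proof of Theorem~\ref{log(n)^3}, but to swap in the sharper lemmata available when $\alpha>n$: use Lemma~\ref{5k+1} (growth factor $6$) in place of Lemma~\ref{4k+3}, and Lemma~\ref{2k+4sqrt(alpha/n)} (finishing radius $4\sqrt{\alpha/n}$) in place of Lemma~\ref{2k+2alpha/n} (whose finishing radius $2\alpha/n$ is now much too large). These substitutions are precisely what will produce the factors $\lg^{\lg 6} n$ and $\sqrt{\alpha/n}$ in the diameter bound.

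Concretely, I set $a_1 := \lceil 2\sqrt{\alpha/n}\rceil + 1$ and $a_{i+1} := 6a_i + 1$, so $a_i = O(6^i\sqrt{\alpha/n})$. Since every equilibrium graph is connected, $N_{a_1} \geq \min(a_1+1,n) > 2\sqrt{\alpha/n}$; if this minimum equals $n$, the diameter is already at most $a_1 = O(\sqrt{\alpha/n})$ and we are done. Otherwise I iterate Lemma~\ref{5k+1} with $k = a_i$ and $Y = N_{a_i}$: for each $i \geq 1$, either some vertex $v$ satisfies $|N_{5a_i}(v)| > n/2$ (let $j$ denote the first such index) or $N_{a_{i+1}} \geq N_{a_i}^2\cdot a_i n/(2\alpha)$.

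A short induction then shows $N_{a_i} \geq 2^{2^{i-1}}\sqrt{\alpha/n}$ for every $i \leq j$. The base case is the connectedness observation. For the step, $a_i \geq a_1 \geq 2\sqrt{\alpha/n}$ gives
$N_{a_{i+1}} \geq \bigl(2^{2^{i-1}}\sqrt{\alpha/n}\bigr)^2 \cdot a_i n/(2\alpha) = 2^{2^i}\cdot a_i/2 \geq 2^{2^i}\sqrt{\alpha/n}$.
Since $N_{a_j} \leq n$, this forces $2^{j-1} \leq \lg\bigl(n\sqrt{n/\alpha}\bigr) \leq (3/2)\lg n$, so $j = O(\lg\lg n)$ and therefore $a_j = O(\sqrt{\alpha/n}\,\lg^{\lg 6} n)$.

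Finally, applying Lemma~\ref{2k+4sqrt(alpha/n)} to $v$ at radius $5a_j$ gives $|N_{10a_j+4\sqrt{\alpha/n}}(v)| \geq n$, so the diameter is $O(\sqrt{\alpha/n}\,\lg^{\lg 6} n)$. Plugging this into Lemma~\ref{general_bound} with $g = \lg n$ bounds the equilibrium cost by $\alpha\cdot O(n) + (\lg n)\cdot O\bigl(n^2\sqrt{\alpha/n}\,\lg^{\lg 6} n\bigr) = O\bigl(\alpha n + n^2\sqrt{\alpha/n}\,\lg^{1+\lg 6} n\bigr)$; dividing by the social optimum $\Omega(\alpha n)$ (attained by a star, using $\alpha>n$) yields the claimed $O(\sqrt{n/\alpha}\,\lg^{1+\lg 6} n)$. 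The delicate point I expect is the calibration of $a_1$: it must be at least $2\sqrt{\alpha/n}$ so that $a_i n/(2\alpha) \geq \sqrt{n/\alpha}$ throughout, which is exactly what makes the normalized quantity $N_{a_i}/\sqrt{\alpha/n}$ truly square at each step; a smaller choice would degrade the growth to singly-exponential and lose the $\lg^{\lg 6} n$ factor entirely.
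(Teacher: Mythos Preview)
Your proof is correct and follows essentially the same route the paper takes: it is the direct analogue of the proof of Theorem~\ref{log(n)^3}, with Lemma~\ref{5k+1} substituted for Lemma~\ref{4k+3} (yielding the recursion $a_{i+1}=6a_i+1$ and the extra $k$ factor in the growth rate) and Lemma~\ref{2k+4sqrt(alpha/n)} substituted for Lemma~\ref{2k+2alpha/n} in the finishing step. Your calibration $a_1\approx 2\sqrt{\alpha/n}$ and the induction $N_{a_i}\ge 2^{2^{i-1}}\sqrt{\alpha/n}$ are exactly the expected adaptations, and the wrap-up via Lemma~\ref{general_bound} with $g=\lg n$ and the $\Omega(\alpha n)$ lower bound on the optimum matches the paper's pattern.
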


By Theorem~\ref{sqrt(alpha/n)}, we conclude the following:

\begin{corollary}
For $\alpha=\Omega(n \lg^{2+2\lg 6} n) \approx \Omega(n \lg^{7.16} n)$,
the price of anarchy is $O(1)$.
\end{corollary}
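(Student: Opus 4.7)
The plan is essentially a one-line algebraic substitution into Theorem~\ref{sqrt(alpha/n)}, so the proof sketch is short. I would start by recalling that Theorem~\ref{sqrt(alpha/n)} guarantees, for every $\alpha > n$, a price-of-anarchy bound of $O(\sqrt{n/\alpha}\lg^{1+\lg 6} n)$. To show this collapses to $O(1)$ under the stated growth condition on $\alpha$, I need only determine when the factor $\sqrt{n/\alpha}\lg^{1+\lg 6} n$ is bounded by a constant.

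Setting $\sqrt{n/\alpha}\lg^{1+\lg 6} n \le C$ and squaring both sides gives $n \lg^{2+2\lg 6} n / \alpha \le C^2$, which rearranges to $\alpha \ge (1/C^2)\, n \lg^{2+2\lg 6} n$. Thus the hypothesis $\alpha = \Omega(n \lg^{2+2\lg 6} n)$ is exactly what is required to drive the bound of Theorem~\ref{sqrt(alpha/n)} down to $O(1)$. The numerical value $2 + 2\lg 6 \approx 7.17$ since $\lg 6 \approx 2.585$, matching the $\lg^{7.16} n$ in the corollary's statement. I would also briefly note that this regime of $\alpha$ is still covered by the hypothesis $\alpha > n$ of Theorem~\ref{sqrt(alpha/n)}, so no separate argument is needed.

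There is no real obstacle here: all the heavy lifting was done in Theorem~\ref{sqrt(alpha/n)} (which itself rests on Lemma~\ref{5k+1} and the region-growing style analysis), and this corollary is a pure numerical consequence. The only thing to be slightly careful about is tracking the exponent $1+\lg 6$ correctly through the squaring step so as to land on $2+2\lg 6$ rather than, say, $2(1+\lg 6) = 2+2\lg 6$ (which is the same thing, but worth double-checking to ensure the stated $\approx 7.16$ constant is faithfully derived). No other case analysis, no new combinatorial argument, and no auxiliary lemmata are needed.
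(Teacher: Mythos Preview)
Your proposal is correct and matches the paper's approach: the paper itself offers no separate proof for this corollary, simply stating that it follows from Theorem~\ref{sqrt(alpha/n)}, which is exactly the substitution you carry out. The algebra and the exponent check are right.
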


\section{Cooperative Version in General Graphs}
\label{Cooperative Version in General Graphs}
\label{cooperativeupper}

In this section, we study the price of anarchy when only some links
can be created, e.g., because of physical limitations.
In this case, the social optimum is no longer simply a clique or a star.

We start by bounding the growth of distances from the host graph $G$
to an arbitrary equilibrium graph~$G_s$:

\begin{lemma} \label{k^2/3}
  For any two vertices $u$ and $v$ in any equilibrium graph $G_s$,
  $d_{G_s}(u,v) = O(d_G(u,v) + \alpha^{1/3} d_G(u,v)^{2/3})$.
\end{lemma}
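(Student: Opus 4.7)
Let $d = d_G(u,v)$ and fix a shortest $G$-path $u = w_0, w_1, \ldots, w_d = v$; set $\delta_i = d_{G_s}(w_i, w_{i+1})$, so that by the triangle inequality $d_{G_s}(u,v) \le \sum_{i=0}^{d-1} \delta_i$. My plan is a two-step argument: first establish a pointwise bound $\delta_i = O(\alpha^{1/3})$, and then aggregate along the path via H\"older's inequality.

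For the pointwise bound I apply the collaborative-equilibrium condition to the host edge $e_i = \{w_i, w_{i+1}\}$. If $e_i \in G_s$ then $\delta_i = 1$; otherwise, for each $k < \delta_i/2$ the $G_s$-balls $A_k = N_k(w_i)$ and $B_k = N_k(w_{i+1})$ are disjoint, and each contains at least $k+1$ vertices taken from the $G_s$-geodesic between $w_i$ and~$w_{i+1}$. For every pair $(x,y) \in A_k \times B_k$, the triangle inequality gives $d_{G_s}(x,y) \ge \delta_i - 2k$ while the shortcut via $e_i$ gives distance at most $2k+1$, so the per-pair savings is at least $\delta_i - 4k - 1$. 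The aggregate savings from purchasing $e_i$ are therefore at least $(k+1)^2 (\delta_i - 4k - 1)$, which equilibrium bounds by $\alpha$; setting $k = \lfloor \delta_i/5 \rfloor$ yields $\delta_i = O(\alpha^{1/3})$.

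For the aggregation, I apply H\"older's inequality to the vectors $(\delta_i)$ and $(1)$ with exponents $3$ and $3/2$:
\[
  d_{G_s}(u,v) \;\le\; \sum_{i=0}^{d-1}\delta_i \;\le\; d^{2/3}\Bigl(\sum_{i=0}^{d-1}\delta_i^3\Bigr)^{\!1/3}.
\]
It therefore suffices to show $\sum_i \delta_i^3 = O(d + \alpha)$; combined with $(d+\alpha)^{1/3} \le d^{1/3} + \alpha^{1/3}$, this gives $d_{G_s}(u,v) \le d + O(\alpha^{1/3} d^{2/3})$, as claimed. To obtain the bound $\sum_i \delta_i^3 = O(d+\alpha)$, I sharpen the per-edge equilibrium analysis by enlarging the witness sets: using the pointwise bound, each neighbouring path vertex $w_{i-j}$ lies within $G_s$-distance $j \cdot O(\alpha^{1/3})$ of $w_i$, and so for a suitable $k$ the set $A_k(w_i)$ also contains these $w_{i-j}$'s along with interior vertices of their own $G_s$-geodesics. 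Summing the strengthened equilibrium constraints along the path, with overlaps handled by an amortization, yields $\sum_i \delta_i^3 = O(d + \alpha)$.

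The main obstacle is precisely this final aggregation: the enlarged $|A_k(w_i)|$ depends on the surrounding $\delta_j$-values, and the $A_k$'s of different $w_i$'s can overlap. I would manage this by partitioning the $G$-path into contiguous blocks of roughly equal cumulative $\delta$-sum, applying the sharpened equilibrium constraint to a representative edge of each block, and summing the resulting block contributions; the pointwise bound ensures each block spans at most $O(\alpha^{1/3})$ units of $G_s$-distance, which controls the overlap between neighboring witness sets.
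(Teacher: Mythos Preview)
Your pointwise bound $\delta_i=O(\alpha^{1/3})$ is correct and indeed appears in the paper's proof as one of two per-step inequalities. The gap is in the aggregation. The intermediate claim $\sum_i \delta_i^3 = O(d+\alpha)$ is not true for an arbitrary shortest $G$-path, so no amount of amortization or blocking will establish it. For a concrete obstruction, let $G_s$ be the path $0,1,\dots,n-1$, and let $G$ add the $d$ ``zigzag'' chords $\{0,n/2\},\{n/2,2\},\{2,n/2+2\},\dots$; then $G_s$ is a collaborative equilibrium once $\alpha=\Theta(n^3)$ (each chord saves $\Theta(n^3)$ in total usage cost), and the zigzag $w_0=0,w_1=n/2,w_2=2,\dots,w_d=d$ is a shortest $G$-path from $0$ to $d$. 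Here every $\delta_i\approx n/2$, so $\sum_i \delta_i^3=\Theta(d\,n^3)$ while $d+\alpha=\Theta(n^3)$, and the ratio is $\Theta(d)$. The deeper point is that you are bounding $\sum_i\delta_i$, and this quantity can exceed $d_{G_s}(u,v)$ by an unbounded factor when the $G$-path backtracks in~$G_s$; your triangle-inequality step $d_{G_s}(u,v)\le\sum_i\delta_i$ throws away exactly the information needed. Your enlargement idea also does not do what you want: to keep per-pair savings $\Omega(\delta_i)$ you must take witness balls of radius $O(\delta_i)$, and $w_{i-j}$ lies in such a ball only when $\sum_{l=1}^{j}\delta_{i-l}\le c\,\delta_i$, which for comparable $\delta$'s allows only $j=O(1)$.

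The paper takes a different route that sidesteps this entirely: instead of the step lengths $\delta_i$, it tracks the \emph{cumulative} distance $d_i=d_{G_s}(v_0,v_i)$ from the fixed endpoint. The crucial observation is that the $G_s$-geodesic from $v_0$ to $v_i$ supplies a witness set of size $d_i$ on the $v_i$ side of the host edge $\{v_i,v_{i+1}\}$; combining this with a small ball on the $v_{i+1}$ side yields the recurrence $d_{i+1}\le d_i+1+O\bigl(\min\{\sqrt{\alpha/d_i},\,\alpha^{1/3}\}\bigr)$, which one then solves inductively to get $d_k=O(k+\alpha^{1/3}k^{2/3})$. Switching from $\delta_i$ to $d_i$ is the missing idea.
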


\begin{proof}
Let $u=v_0, v_1, \dots, v_k=v$ be a shortest path in $G$ between $u$
and~$v$, so $k=d_G(u,v)$. Suppose that the distance between $v_0$ and
$v_i$ in $G_s$ is $d_i$, for $0 \leq i \leq k$. We first prove that $d_{i+1}
\leq d_i + 1 + \sqrt{9 \alpha/d_i}$ for $0 \leq i < k$. If edge
$\{v_i,v_{i+1}\}$ already exists in $G_s$, the inequality clearly holds.
Otherwise, adding this edge decreases the distance between $x$ and
$y$ by at least $\frac{d_{i+1}-d_i}{3}$, where $x$ is a vertex whose
distance is at most $\frac{d_{i+1}-d_i}{3}-1$ from $v_{i+1}$ and $y$
is a vertex in a shortest path from $v_i$ to~$v_0$. Therefore any
vertex $x$ whose distance is at most $\frac{d_{i+1}-d_i}{3}-1$ from
$v_{i+1}$ can pay $\frac{d_{i+1}-d_i}{3}d_i$ for this edge.
Because this edge does not exist in $G_s$ and
because there are at least $\frac{d_{i+1}-d_i}{3}$ vertices
of distance at most $\frac{d_{i+1}-d_i}{3}-1$ from~$v_{i+1}$,
we conclude that ${\left(\frac{d_{i+1}-d_i}{3}\right)}^2d_i \leq
\alpha$. Thus we have  $d_{i+1} \leq d_i + 1
+ \sqrt{9 \alpha/d_i}$ for $0 \leq i < k$.
Next we prove that $d_{i+1} \leq d_i + 1 + 5 \alpha^{1/3}$. If edge
$\{v_i,v_{i+1}\}$ already exists in $G_s$, the inequality clearly holds.
Otherwise, adding this edge decreases the distance between $z$ and
$w$ by at least $\frac{d_{i+1}-d_i}{5}$, where $z$ and $w$ are two
vertices whose distances from $v_{i+1}$ and $v_i$, respectively,
are less than $\frac{d_{i+1}-d_i}{5}$. There are at least at least
$\left(\frac{d_{i+1}-d_i}{5}\right)^2$ pair of vertices like
$(z,w)$. Because the edge $\{v_i,v_{i+1}\}$ does not exist
in $G_s$, we conclude that $\left(\frac{d_{i+1}-d_i}{5}\right)^3 \leq
\alpha$. Therefore $d_{i+1} \leq d_i + 1 +5 \alpha^{1/3}$. Combining
these two inequalities, we obtain $d_{i+1} \leq d_i + 1 + \min\{
\sqrt{9 \alpha/d_i} , 5 \alpha^{1/3} \}$.

Inductively we prove that $d_j \leq 3j + 7 \alpha^{1/3} + 5
\alpha^{1/3}j^{2/3}$. For $j \leq 2$, the inequality is clear. Now
suppose by induction that $d_j \leq 3j +7 \alpha^{1/3} + 5
\alpha^{1/3}j^{2/3}$. If $d_j \leq 2
\alpha^{1/3}$, we reach the desired inequality
using the inequality $d_{j+1} \leq d_j + 1 + 5 \alpha^{1/3}$.
Otherwise, we know that
$d_{j+1} \leq d_j + 1 + \sqrt{9 \alpha/d_j} = f(d_j)$
and to find the maximum of the
function $f(d_j)$ over the domain $d_j \in [2\alpha^{1/3},j + 7 \alpha^{1/3} +
5 \alpha^{1/3}j^{2/3}]$, we should check $f$'s critical points, including
the endpoints of the domain interval and where $f$'s derivative is zero.
We reach three values for~$d_j$: $2 \alpha^{1/3}$,  $j + 7
\alpha^{1/3} + 5 \alpha^{1/3}j^{2/3}$, and $\left(\frac{9
\alpha}{4}\right)^{1/3}$. Because the third value is not in the domain,
we just need to check the first two values. The first value is also
checked, so just the second value remains. For the second value,
we have
$d_{j+1} \leq \textstyle d_j + 1 + \sqrt{9 \alpha/d_j}
\leq \textstyle j + 7
\alpha^{1/3} + 5 \alpha^{1/3}j^{2/3} + 1
+ \sqrt{\frac{9 \alpha}{j +
7 \alpha^{1/3} + 5 \alpha^{1/3}j^{2/3}}}
\leq \textstyle j + 1 + 7 \alpha^{1/3} +  5 \alpha^{1/3}j^{2/3} +
\sqrt{\frac{10\alpha}{5\alpha^{1/3}j^{2/3}}}
\leq \textstyle j + 1 + 7 \alpha^{1/3} +  5 \alpha^{1/3}j^{2/3} +
\frac{\alpha^{1/3}\sqrt{2}}{j^{1/3}}$.
Because $(j+1)^{2/3} - j^{2/3} = \frac{(j+1)^2-
j^2}{(j+1)^{4/3}+(j+1)^{2/3}j^{2/3}+j^{4/3}} \allowbreak \geq
\frac{2j}{3(j+1)^{4/3}}$, we have
$j + 1 + 7 \alpha^{1/3} +  5 \alpha^{1/3}j^{2/3} +
\frac{\alpha^{1/3}\sqrt{2}}{j^{1/3}}
\leq \textstyle j + 1 + 7 \alpha^{1/3} + 5 \alpha^{1/3}(j+1)^{2/3}
- 5 \alpha^{1/3}\frac{2j}{3(j+1)^{4/3}} +
\frac{\alpha^{1/3}\sqrt{2}}{j^{1/3}}
\leq \textstyle j + 1 + 7 \alpha^{1/3} + 5 \alpha^{1/3}(j+1)^{2/3}
- \frac{10 \alpha^{1/3} j}{3j^{4/3}} +
\frac{\alpha^{1/3}\sqrt{2}}{j^{1/3}}
\leq \textstyle j + 1 + 7 \alpha^{1/3} + 5
\alpha^{1/3}(j+1)^{2/3}$.

Note that $j+1 > 2$ and $d_k = d_{G_s}(u,v)$. Therefore $d_{G_s}(u,v)$ is at most $O(d_G(u,v) + \alpha^{1/3}d_G(u,v)^{2/3})$ and the desired inequality is proved.
\end{proof}

Using this Lemma~\ref{k^2/3}, we prove two different bounds relating
the sum of all pairwise distances in the two graphs:

\begin{corollary} \label{alpha^1/3}
For any equilibrium graph~$G_s$,
$\sum_{u,v \in V(G)} d_{G_s}(u,v)
= O(\alpha^{1/3}) \cdot \sum_{u,v \in V(G)} d_G(u,v)$.
\end{corollary}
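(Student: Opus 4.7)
The plan is to derive this bound as a direct summation of the pointwise estimate in Lemma~\ref{k^2/3} over all ordered pairs of distinct vertices. For each pair $(u,v)$ with $u \neq v$, Lemma~\ref{k^2/3} gives
$$d_{G_s}(u,v) \;=\; O\bigl(d_G(u,v) + \alpha^{1/3}\, d_G(u,v)^{2/3}\bigr),$$
and pairs with $u=v$ contribute zero on both sides, so they can be safely included. Summing the inequality yields
$$\sum_{u,v \in V(G)} d_{G_s}(u,v) \;\leq\; O(1) \sum_{u,v} d_G(u,v) \;+\; O(\alpha^{1/3}) \sum_{u,v} d_G(u,v)^{2/3}.$$

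The one small point to handle is the $d_G(u,v)^{2/3}$ term. Here I would use the observation that the host graph has integer edge lengths, so every pair of distinct vertices has $d_G(u,v) \geq 1$, which implies $d_G(u,v)^{2/3} \leq d_G(u,v)$. Hence $\sum_{u,v} d_G(u,v)^{2/3} \leq \sum_{u,v} d_G(u,v)$, and the entire right-hand side collapses to $O(\alpha^{1/3}) \sum_{u,v} d_G(u,v)$, provided $\alpha \geq 1$ (which is the standing assumption, since otherwise every equilibrium graph is just the host graph and the corollary is trivial).

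Honestly there is no real obstacle here: the content lies entirely in Lemma~\ref{k^2/3}, and the corollary is a one-line summation plus the trivial monotonicity estimate $x^{2/3} \leq x$ for $x \geq 1$. The only thing to be careful about is making explicit that we are summing over pairs in the same connected component (so the distances on the right are finite); if $G$ is disconnected then $G_s$ is as well and both sides of the inequality are $+\infty$ in the same way, so the statement is vacuously true.
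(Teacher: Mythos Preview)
Your argument is correct and is exactly the approach the paper intends: the corollary is stated immediately after Lemma~\ref{k^2/3} with no separate proof, so the paper's ``proof'' is precisely the one-line summation you wrote, using $d_G(u,v)^{2/3}\le d_G(u,v)$ for integer distances $d_G(u,v)\ge 1$.
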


\begin{theorem} \label{n^1/3}
For any equilibrium graph $G_s$,
$\sum_{u,v \in V(G)} d_{G_s}(u,v) \leq
\min\{O(n^{1/3}) (\alpha n + \sum_{u,v \in
V(G)} d_G(u,v)) , n^3\}$.
\end{theorem}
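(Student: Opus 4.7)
The plan is to get the bound $O(n^{1/3})(\alpha n + \sum_{u,v} d_G(u,v))$ by summing the pointwise bound from Lemma~\ref{k^2/3} over all pairs and then invoking concavity of $x^{2/3}$, while the trivial $n^3$ bound follows because any equilibrium graph must be connected (otherwise some player has infinite cost) and hence has diameter at most $n-1$, so the sum of $n^2$ pairwise distances is at most $n^3$.

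First I would apply Lemma~\ref{k^2/3} to every pair $(u,v)\in V\times V$ and sum to get
\[
  \sum_{u,v} d_{G_s}(u,v) \;=\; O\!\Big(\,\sum_{u,v} d_G(u,v) \;+\; \alpha^{1/3} \sum_{u,v} d_G(u,v)^{2/3}\Big).
\]
To handle the second sum, I would apply Jensen's inequality to the concave function $x\mapsto x^{2/3}$ across the $n^2$ pairs: writing $S=\sum_{u,v} d_G(u,v)$,
\[
  \sum_{u,v} d_G(u,v)^{2/3} \;\leq\; n^2 \Big(\frac{S}{n^2}\Big)^{2/3} \;=\; n^{2/3}\,S^{2/3}.
\]
Plugging this back in gives $\sum_{u,v} d_{G_s}(u,v) = O(S + \alpha^{1/3} n^{2/3} S^{2/3})$.

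Next I would show that the mixed term $\alpha^{1/3} n^{2/3} S^{2/3}$ is dominated by $n^{1/3}(\alpha n + S)$ up to constants. Cubing both sides reduces this to the inequality $\alpha n S^2 \leq (\alpha n + S)^3$, which is immediate by case analysis: if $S\geq \alpha n$ the right-hand side is at least $S^3\geq \alpha n\,S^2$, and if $S<\alpha n$ it is at least $(\alpha n)^3 > \alpha n\,S^2$. Thus $\sum_{u,v} d_{G_s}(u,v) = O(n^{1/3}(\alpha n + S))$, and combining with the trivial $n^3$ bound finishes the proof.

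I do not expect a real obstacle here since the pointwise distance blow-up is already captured by Lemma~\ref{k^2/3}; the only slightly delicate step is choosing the right convexity inequality, and Jensen applied to $x^{2/3}$ over the $n^2$ pairs is exactly the tool that turns the pointwise $d_G^{2/3}$ bound into a global $n^{2/3} S^{2/3}$ bound with the correct exponent of $n^{1/3}$ in the end.
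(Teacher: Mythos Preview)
Your proposal is correct and follows essentially the same route the paper indicates: it derives the bound by summing Lemma~\ref{k^2/3} over all pairs and then controlling the $\sum d_G(u,v)^{2/3}$ term with the concavity of $x^{2/3}$ (Jensen over the $n^2$ pairs), together with the trivial $n^3$ bound from connectedness of~$G_s$. The algebraic step $\alpha n S^2 \le (\alpha n + S)^3$ is exactly what is needed to convert $\alpha^{1/3} n^{2/3} S^{2/3}$ into $O\!\big(n^{1/3}(\alpha n + S)\big)$, so nothing is missing.
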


Now we can bound the price of anarchy for the various ranges of~$\alpha$,
combining Corollary~\ref{alpha^1/3}, Theorem~\ref{n^1/3}, and
Lemma~\ref{general_bound}, with different choices of~$g$.

\begin{theorem}
In the cooperative network creation game in general graphs,
the price of anarchy is at most
\begin{enumerate}
\item[\rm (a)]
  $O(\alpha^{1/3})$ for $\alpha < n$~~~[$g=6$ in Lemma~\ref{general_bound} and Corollary~\ref{alpha^1/3}],
\item[\rm (b)]
  $O(n^{1/3})$ for $n \leq \alpha \leq n^{5/3}$~~~[$g=6$ in Lemma~\ref{general_bound} and Theorem~\ref{n^1/3}],
\item[\rm (c)]
  $O(\frac{n^2}{\alpha})$ for $n^{5/3} \leq \alpha < n^{2-\epsilon}$~~~[$g=2/\epsilon$ in Lemma~\ref{general_bound} and Theorem~\ref{n^1/3}], and
\item[\rm (d)]
  $O(\frac{n^2}{\alpha} \lg n)$
  for $n^2 \leq \alpha$~~~[$g=\lg n$ in Lemma~\ref{general_bound} and Theorem~\ref{n^1/3}].
\end{enumerate}
\end{theorem}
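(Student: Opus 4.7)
The plan is to assemble the bound in each range by playing the two terms in Lemma~\ref{general_bound} off of standard lower bounds on the social optimum $\opt$. A connected spanning subgraph requires at least $n-1$ edges, and distances in $G_s$ cannot drop below those in the host graph, so $\opt \geq (n-1)\alpha$ and $\opt \geq \sum_{u,v} d_G(u,v) \geq n(n-1)$. For any $g$, Lemma~\ref{general_bound} states that $c(G_s) \leq O(\alpha \, n^{1+2/g}) + g \sum_{u,v} d_{G_s}(u,v)$, so the strategy is to choose $g$ making the ``edge piece'' $\alpha n^{1+2/g}$ only slightly above $\alpha n$, and then to bound the ``usage piece'' using Corollary~\ref{alpha^1/3} when host distances carry most of the mass, or Theorem~\ref{n^1/3} (including its trivial $n^3$ cap) when they do not.

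For part (a), with $\alpha < n$, I take $g=6$. The edge piece $O(\alpha n^{4/3})$ divided by $\opt \geq n(n-1)$ yields a ratio $O(\alpha/n^{2/3})$, which is $O(\alpha^{1/3})$ precisely because $\alpha \leq n$ is equivalent to $\alpha^{2/3} n^{4/3} \leq n^2$. The usage piece, via Corollary~\ref{alpha^1/3}, is $O(\alpha^{1/3}) \sum_{u,v} d_G(u,v) \leq O(\alpha^{1/3})\,\opt$, matching the bound. For part (b), with $n \leq \alpha \leq n^{5/3}$, I keep $g=6$ but invoke Theorem~\ref{n^1/3}: the edge piece $O(\alpha n^{4/3})$ divided by $\opt \geq (n-1)\alpha$ is $O(n^{1/3})$, and the usage piece $O(n^{1/3})(\alpha n + \sum_{u,v} d_G(u,v))$ divided by $\opt \geq \alpha n + \sum_{u,v} d_G(u,v)$ (via the two separate opt bounds) is also $O(n^{1/3})$.

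For part (c), with $n^{5/3} \leq \alpha < n^{2-\epsilon}$, I set $g = 2/\epsilon$ so that $n^{2/g} = n^\epsilon$, and apply the trivial $\sum_{u,v} d_{G_s}(u,v) \leq n^3$ bound from Theorem~\ref{n^1/3}. Dividing by $\opt \geq \alpha n$, the edge piece contributes $O(n^\epsilon)$ and the usage piece contributes $O((1/\epsilon)\,n^2/\alpha)$; the hypothesis $\alpha < n^{2-\epsilon}$ is exactly what is needed to make $n^\epsilon \leq n^2/\alpha$, so the total is $O(n^2/\alpha)$. For part (d), with $\alpha \geq n^2$, I take $g = \lg n$, which makes $n^{2/g} = n^{2/\lg n}$ a constant; the edge piece divided by $\opt \geq \alpha n$ is $O(1)$, and the $n^3$ bound on $\sum d_{G_s}(u,v)$ contributes $O((\lg n)\,n^2/\alpha)$, giving the stated bound.

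The main obstacle is pairing the right lower bound on $\opt$ with the right piece of the equilibrium cost in each range. Only in range (a) is $\opt$'s usage cost large enough to absorb the edge piece (because $\alpha$ is then so small that $n^2 \geq \alpha^{2/3} n^{4/3}$), and only in ranges (b)--(d) does $\opt$'s edge cost $(n-1)\alpha$ dominate; this is what forces the choice of $g$ to grow from the constant $6$ through $2/\epsilon$ up to $\lg n$, and forces the switch from Corollary~\ref{alpha^1/3} to Theorem~\ref{n^1/3} and then to the trivial $n^3$ bound. Once the correct pairing is identified in each range, the remaining arithmetic is routine.
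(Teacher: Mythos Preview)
Your proposal is correct and matches the paper's own approach exactly. The paper does not give a separate proof of this theorem; the bracketed hints in the statement \emph{are} the proof, and you have fleshed them out faithfully---using the two lower bounds on $\opt$, applying Lemma~\ref{general_bound} with the indicated $g$, and invoking Corollary~\ref{alpha^1/3} in range~(a), the $O(n^{1/3})(\alpha n+\sum d_G)$ branch of Theorem~\ref{n^1/3} in range~(b), and the $n^3$ branch in ranges~(c) and~(d). One very minor remark: in part~(d) your computation actually yields $O(1)+O((n^2/\alpha)\lg n)$, so for $\alpha$ well above $n^2\lg n$ the bound reads $O(1)$ rather than $O((n^2/\alpha)\lg n)$; this is consistent with the paper's summary table and is a quirk of the theorem's phrasing, not of your argument.
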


\section{Unilateral Version in General Graphs}
\label{Unilateral Version in General Graphs}

Next we consider how a general host graph affects the unilateral
version of the problem.
Some proofs are similar to proofs for the cooperative
version in Section~\ref{cooperativeupper} and hence omitted.

\begin{lemma} \label{k^1/2uni}
  For any two vertices $u$ and $v$ in any equilibrium graph $G_s$,
  $d_{G_s}(u,v) = O(d_G(u,v) + \alpha^{1/2} d_G(u,v)^{1/2})$.
\end{lemma}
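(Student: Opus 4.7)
The plan is to follow the structure of the proof of Lemma~\ref{k^2/3}, replacing the cost-sharing arguments that yielded the $\alpha^{1/3}$ exponent with single-player deviation arguments, which give a weaker $\alpha^{1/2}$ exponent. Let $u = v_0, v_1, \dots, v_k = v$ be a shortest path in~$G$, and set $d_i = d_{G_s}(v_0, v_i)$ and $\Delta_i = \max\{0, d_{i+1} - d_i - 1\}$. If $\{v_i, v_{i+1}\}$ already lies in~$G_s$, then $\Delta_i = 0$. Otherwise, the Nash condition forces that player $v_{i+1}$ cannot strictly benefit from purchasing $\{v_i, v_{i+1}\}$ at cost~$\alpha$, so the total reduction in $v_{i+1}$'s distance-sum cost upon adding the edge is at most~$\alpha$.

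I would lower bound this reduction in two complementary ways. For the first bound, let $w_0 = v_0, w_1, \dots, w_{d_i} = v_i$ be a shortest path in~$G_s$. The triangle inequality gives $d_{G_s}(v_{i+1}, w_j) \ge d_{i+1} - j$, while adding the edge drops this to at most $1 + (d_i - j)$, so each $w_j$ contributes at least $\Delta_i$ to the reduction; summing over all $d_i + 1$ path vertices yields $(d_i + 1)\Delta_i \le \alpha$, and hence $\Delta_i \le \alpha/(d_i + 1)$. For the second bound, let $u_0 = v_0, u_1, \dots, u_{d_{i+1}} = v_{i+1}$ be a shortest path in~$G_s$. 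The current distance from $v_{i+1}$ to $u_j$ is exactly $d_{i+1} - j$, while after adding the edge it is at most $1 + d_{G_s}(v_i, u_j) \le 1 + d_i + j$ (by triangle via~$v_0$), so the reduction is at least $\Delta_i - 2j$; summing over $j = 0, 1, \dots, \lfloor \Delta_i/2 \rfloor - 1$ gives $\Omega(\Delta_i^2)$ and therefore $\Delta_i = O(\sqrt{\alpha})$.

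Combining, $\Delta_i \le \min\{\alpha/(d_i + 1),\ c\sqrt{\alpha}\}$ for some absolute constant~$c$. I would then prove by induction on $j$ a closed-form upper bound of the form $d_j \le C_1 j + C_2 \alpha^{1/2} j^{1/2} + C_3 \alpha^{1/2}$, mirroring the induction in Lemma~\ref{k^2/3} with the exponent $2/3$ replaced by~$1/2$. In the step, if $d_j \le 2\sqrt{\alpha}$ one applies the uniform bound $\Delta_j \le c\sqrt{\alpha}$; otherwise $\Delta_j \le \alpha/d_j$, and one uses that the forward difference $(j+1)^{1/2} - j^{1/2} \ge 1/(2(j+1)^{1/2})$ absorbs an increment of order $\alpha^{1/2}/d_j$ once $C_2$ is chosen large enough. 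Evaluating at $j = k$ then yields $d_{G_s}(u, v) = O(d_G(u,v) + \alpha^{1/2} d_G(u,v)^{1/2})$. The main obstacle I expect is the arithmetic bookkeeping required to verify that the constants close the induction simultaneously in both cases and at the crossover $d_j \approx \sqrt{\alpha}$, analogous to the critical-point analysis in the proof of Lemma~\ref{k^2/3}; the underlying geometric estimates are direct single-player analogues of the cooperative arguments and should go through smoothly.
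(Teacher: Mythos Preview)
Your proposal is correct and takes essentially the same approach the paper indicates: adapt the proof of Lemma~\ref{k^2/3} by replacing each cooperative cost-sharing argument with a single-player deviation of $v_{i+1}$, yielding the recurrence $d_{i+1} \le d_i + 1 + \min\{\alpha/(d_i+1),\,O(\sqrt{\alpha})\}$ and then closing the induction with $j^{1/2}$ in place of~$j^{2/3}$. Your second estimate (via the $v_0$--$v_{i+1}$ shortest path and the triangle through~$v_0$) is a mild variant of the more direct choice of counting vertices near $v_i$ along a $v_i$--$v_{i+1}$ shortest path in~$G_s$, but it yields the same $\Delta_i^2 = O(\alpha)$ bound and the rest of the argument goes through unchanged.
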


Again we relate the sum of all pairwise distances in the two graphs:

\begin{corollary} \label{alpha^1/2uni}
For any equilibrium graph~$G_s$,
$\sum_{u,v \in V(G)} d_{G_s}(u,v)
= O(\alpha^{1/2}) \cdot \sum_{u,v \in V(G)} D_G(u,v)$.
\end{corollary}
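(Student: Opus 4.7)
The plan is to derive Corollary~\ref{alpha^1/2uni} as an immediate consequence of Lemma~\ref{k^1/2uni} by summing its pointwise bound over all pairs of vertices. I would apply Lemma~\ref{k^1/2uni} to each pair $u,v \in V(G)$ to get
\[
d_{G_s}(u,v) = O\bigl(d_G(u,v) + \alpha^{1/2} d_G(u,v)^{1/2}\bigr),
\]
and then sum over all ordered pairs, yielding
\[
\sum_{u,v \in V(G)} d_{G_s}(u,v) = O\Bigl(\sum_{u,v} d_G(u,v) + \alpha^{1/2} \sum_{u,v} d_G(u,v)^{1/2}\Bigr).
\]

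The crucial simplification is that both terms on the right collapse into a single $O(\alpha^{1/2})$-multiple of $\sum_{u,v} d_G(u,v)$. Since graph distances are integers with $d_G(u,v) \geq 1$ whenever $u \neq v$ (and the pairs $u=v$ contribute $0$ on both sides), we have $d_G(u,v)^{1/2} \leq d_G(u,v)$, so $\sum_{u,v} d_G(u,v)^{1/2} \leq \sum_{u,v} d_G(u,v)$. Moreover, under the standard convention $\alpha \geq 1$, the linear term $\sum_{u,v} d_G(u,v)$ is itself dominated by $\alpha^{1/2} \sum_{u,v} d_G(u,v)$. Adding the two contributions gives the claimed bound.

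There is essentially no obstacle at this step; all of the work was done inside Lemma~\ref{k^1/2uni}, which handled the edge-by-edge region-growing argument for unilateral equilibria. The only subtleties are to restrict the sums to pairs lying in the same connected component of~$G$ (so that distances on both sides are finite), and to note that the low-$\alpha$ regime $\alpha < 1$ is uninteresting because the linear term alone already dominates. Neither point affects the stated asymptotic conclusion, so the corollary follows directly from the lemma.
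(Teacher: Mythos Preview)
Your proposal is correct and matches the paper's intended argument: the paper omits this proof entirely, noting only that the unilateral proofs are ``similar to proofs for the cooperative version,'' and the analogous Corollary~\ref{alpha^1/3} there follows from Lemma~\ref{k^2/3} by exactly the same summation-and-$d_G(u,v)\geq 1$ trick you describe. There is nothing to add.
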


\begin{theorem} \label{n^1/2uni}
For any equilibrium graph $G_s$,
$\sum_{u,v \in V(G_s)} d_{G_s}(u,v) \leq
\min\{O(n^{1/2}) (\alpha n + \sum_{u,v \in V(G)} D_G(u,v)) , n^3\}$.
\end{theorem}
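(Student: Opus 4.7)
The plan is to adapt the cooperative-version argument (Theorem~\ref{n^1/3}) essentially verbatim, using Lemma~\ref{k^1/2uni} in place of Lemma~\ref{k^2/3}. Since Lemma~\ref{k^1/2uni} already provides a per-pair comparison between $d_{G_s}(u,v)$ and $d_G(u,v)$, there is no need to revisit the equilibrium structure; the entire content is to sum the per-pair bound over all $u,v$ and show it collapses to the desired form. I expect the only nontrivial step to be a convexity/AM--GM manipulation that introduces the $\sqrt{n}$ factor.

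Concretely, I would start by fixing an equilibrium graph $G_s$ and writing, using Lemma~\ref{k^1/2uni},
\[
  \sum_{u,v} d_{G_s}(u,v) \;\leq\; O(1)\Bigl(\sum_{u,v} d_G(u,v) \;+\; \sum_{u,v} \sqrt{\alpha\, d_G(u,v)}\Bigr).
\]
The first term on the right is already of the required form, so the task reduces to bounding $\sum_{u,v}\sqrt{\alpha\, d_G(u,v)}$. Applying Cauchy--Schwarz to the $n^2$ pairs gives
\[
  \sum_{u,v}\sqrt{d_G(u,v)} \;\leq\; \sqrt{n^2}\;\sqrt{\textstyle\sum_{u,v} d_G(u,v)} \;=\; n\,\sqrt{D},
\]
where $D=\sum_{u,v} d_G(u,v)$, and therefore $\sum_{u,v}\sqrt{\alpha\, d_G(u,v)} \leq n\sqrt{\alpha D}$.

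The key step I anticipate as the main (minor) obstacle is extracting exactly the factor $\sqrt{n}$ in front of $\alpha n+D$. For this I would use the AM--GM inequality $2\sqrt{XY}\leq X+Y$ with the choice $X=\alpha n^{3/2}$ and $Y=D\,n^{1/2}$, yielding
\[
  n\sqrt{\alpha D} \;=\; \sqrt{XY} \;\leq\; \tfrac{1}{2}(X+Y) \;=\; \tfrac{1}{2}\sqrt{n}\,(\alpha n + D).
\]
Combining this with the previous inequality gives $\sum_{u,v} d_{G_s}(u,v) = O(\sqrt{n})\bigl(\alpha n + \sum_{u,v} d_G(u,v)\bigr)$, which is the first of the two bounds in the $\min$.

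Finally, the $n^3$ bound is immediate: at equilibrium $G_s$ must be connected (otherwise some player faces infinite usage cost and strictly gains by buying an edge across a cut), so every pairwise distance is at most $n-1$, and there are fewer than $n^2$ pairs, giving $\sum_{u,v} d_{G_s}(u,v) < n^3$. Taking the minimum of the two bounds completes the proof.
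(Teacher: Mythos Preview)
Your proposal is correct and matches the paper's intended approach: the paper omits this proof, stating only that it is ``similar to proofs for the cooperative version in Section~\ref{cooperativeupper},'' and you do exactly that---sum Lemma~\ref{k^1/2uni} over all pairs and collapse the cross term $n\sqrt{\alpha D}$ into $O(\sqrt n)(\alpha n+D)$ via Cauchy--Schwarz and AM--GM, with the trivial $n^3$ bound from connectedness of~$G_s$.
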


To conclude bounds on the price of anarchy, we now use
Lemma \ref{tree} in place of Lemma~\ref{general_bound},
combined with Corollary \ref{alpha^1/2uni} and Theorem \ref{n^1/2uni}.

\begin{theorem}
For $\alpha \geq n$, the price of anarchy is at most
$\min\{O(n^{1/2}) , \frac{n^2}{\alpha}\}$.
\end{theorem}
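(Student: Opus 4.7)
The plan is to combine Lemma~\ref{tree} with the two bounds of Theorem~\ref{n^1/2uni} and compare against an appropriate lower bound on the social optimum. First I would observe that any joint strategy of finite cost must induce a connected spanning subgraph of the host graph $G$ (otherwise some pair has infinite distance), so the social optimum $\text{OPT}$ has at least $n-1$ purchased edges, giving an edge-cost lower bound of $\alpha(n-1) = \Omega(\alpha n)$. Moreover, since $G_{\text{OPT}}$ is a subgraph of $G$, we have $d_{G_{\text{OPT}}}(u,v) \geq d_G(u,v)$ for all $u,v$, so the usage cost is at least $\sum_{u,v} d_G(u,v)$. Together this yields $c(\text{OPT}) = \Omega\bigl(\alpha n + \sum_{u,v} d_G(u,v)\bigr)$.

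Next I would bound the equilibrium cost in two ways. By Lemma~\ref{tree}, any equilibrium graph $G_s$ satisfies $c(s) \leq \alpha n + 2\sum_{u,v} d_{G_s}(u,v)$. Plugging in the first bound of Theorem~\ref{n^1/2uni}, namely $\sum_{u,v} d_{G_s}(u,v) = O(n^{1/2})\bigl(\alpha n + \sum_{u,v} d_G(u,v)\bigr)$, gives $c(s) = O(n^{1/2})\bigl(\alpha n + \sum_{u,v} d_G(u,v)\bigr)$; dividing by the lower bound on $c(\text{OPT})$ yields a price of anarchy of $O(\sqrt n)$. Alternatively, using the trivial bound $\sum_{u,v} d_{G_s}(u,v) \leq n^3$ from the same theorem, we get $c(s) \leq \alpha n + 2n^3$, and dividing by the lower bound $c(\text{OPT}) \geq \alpha(n-1)$ yields a price of anarchy of $1 + O(n^2/\alpha) = O(n^2/\alpha)$ in the relevant regime $\alpha \leq n^2$ (and a trivial constant bound otherwise). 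Taking the minimum of these two bounds finishes the proof.

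The main obstacle is essentially bookkeeping: one has to be careful that the lower bound on $c(\text{OPT})$ is genuinely $\Omega(\alpha n + \sum_{u,v} d_G(u,v))$ rather than just a maximum, so that both upper bounds translate directly into ratios against the \emph{same} denominator. Once that is verified, the rest of the argument is a mechanical assembly of the three previously established facts (Lemma~\ref{tree}, Corollary~\ref{alpha^1/2uni} as subsumed by, and Theorem~\ref{n^1/2uni}), with no further case analysis beyond the choice of which of the two distance bounds to use.
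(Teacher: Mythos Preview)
Your proposal is correct and follows exactly the route the paper indicates: lower-bound the social optimum by $\Omega(\alpha n + \sum_{u,v} d_G(u,v))$ via the $n-1$ edges of any spanning connected subgraph plus the host-graph distances, then combine Lemma~\ref{tree} with the two branches of Theorem~\ref{n^1/2uni} to obtain the $O(\sqrt n)$ and $O(n^2/\alpha)$ bounds respectively. The paper's text merely says to combine Lemma~\ref{tree}, Corollary~\ref{alpha^1/2uni}, and Theorem~\ref{n^1/2uni}; as you note, for $\alpha \geq n$ the corollary is superseded by the first branch of Theorem~\ref{n^1/2uni}, so your omission of it is harmless.
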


\begin{theorem}
For $\alpha < n$, the price of anarchy is at most $O(\alpha^{1/2})$.
\end{theorem}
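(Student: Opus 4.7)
The plan is to combine Lemma~\ref{tree} with Corollary~\ref{alpha^1/2uni} to upper bound the equilibrium cost, and then divide by a trivial lower bound on the social optimum~$\opt$. The argument mirrors the proof of the preceding theorem; only the identity of the dominating term of the upper bound changes when $\alpha < n$.

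First, I would chain Lemma~\ref{tree} and Corollary~\ref{alpha^1/2uni} to write
\[
c(s) \;\leq\; \alpha n + 2\sum_{u,v\in V(G_s)} d_{G_s}(u,v) \;\leq\; \alpha n + O(\alpha^{1/2}) \sum_{u,v\in V(G)} d_G(u,v).
\]
Next, I would lower-bound $\opt$. Any joint strategy of finite cost must induce a connected spanning subgraph of the host graph~$G$, which has at least $n-1$ purchased edges (creation cost at least $\alpha(n-1)$) and distances no smaller than in~$G$ itself (usage cost at least $\sum_{u,v\in V(G)} d_G(u,v)$). Hence
\[
\opt \;\geq\; \alpha(n-1) + \sum_{u,v\in V(G)} d_G(u,v).
\]
Dividing term by term, $\alpha n$ is absorbed by $\alpha(n-1)$ for an $O(1)$ contribution, while $O(\alpha^{1/2})\sum d_G$ is absorbed by $\sum d_G$ for an $O(\alpha^{1/2})$ contribution, so the ratio is $O(\alpha^{1/2})$.

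I do not foresee any real obstacle: the argument is a straightforward assembly of the preceding lemmas with the trivial lower bound on $\opt$. The only point worth noting is that the restriction $\alpha < n$ is used only to make the statement interesting---for $\alpha \geq n$ the previous theorem already yields the tighter bound $\min\{O(n^{1/2}), n^2/\alpha\}$, whereas the argument sketched above would give the same $O(\alpha^{1/2})$ bound regardless.
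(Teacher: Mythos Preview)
Your proposal is correct and follows precisely the route the paper indicates: combine Lemma~\ref{tree} with Corollary~\ref{alpha^1/2uni} to bound the equilibrium cost by $\alpha n + O(\alpha^{1/2})\sum_{u,v} d_G(u,v)$, then compare against the trivial lower bound $\opt \geq \alpha(n-1) + \sum_{u,v} d_G(u,v)$. The paper omits the details, but its remark that the bounds are obtained by using Lemma~\ref{tree} in place of Lemma~\ref{general_bound} together with Corollary~\ref{alpha^1/2uni} confirms the approach is the same.
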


\section{Lower Bounds in General Graphs}
\label{Lower Bounds in General Graphs}

In this section, we prove polynomial lower bounds on the price of anarchy
for general host graphs, first for the cooperative version and second
for the unilateral version.

\begin{theorem} \label{lowerboundcooperative}
The price of anarchy in the cooperative game is
$\Omega(\min\{\sqrt{\frac{\alpha}{n}} , \frac{n^2}{\alpha}\})$.
\end{theorem}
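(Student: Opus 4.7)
The plan is to prove this lower bound by an explicit construction: for each relevant value of~$\alpha$, I would design a host graph~$H$ on~$n$ vertices and a collaborative equilibrium joint strategy~$s$ such that the cost $c(s)$ exceeds the social optimum on~$H$ by the claimed ratio. The analysis splits according to the $\min$: the bound is $\sqrt{\alpha/n}$ in the regime $n \leq \alpha \leq n^{5/3}$ and $n^2/\alpha$ in the regime $\alpha \geq n^{5/3}$, with both terms matching at $\alpha = n^{5/3}$.

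In both regimes, the plan is to take~$H$ to have constant diameter, so that the social optimum on~$H$ has cost at most $O(\alpha n + n^2) = O(\alpha n)$ in the interesting range $\alpha \geq n$. The equilibrium graph~$G_s$, by contrast, is a sparse, high-diameter subgraph of~$H$, forcing the $\sum_{u,v}d_{G_s}(u,v)$ term to dominate. For the $n^2/\alpha$ regime, I would make~$G_s$ have diameter $\Theta(n)$ (and therefore $\sum d_{G_s} = \Theta(n^3)$), yielding a ratio of $\Theta(n^3/(\alpha n)) = \Theta(n^2/\alpha)$. For the $\sqrt{\alpha/n}$ regime, I would instead target an intermediate diameter $D \approx \alpha^{3/2}/n^{3/2}$, so that $n^2 D \approx \alpha^{3/2}\sqrt{n}$ and the ratio becomes $\sqrt{\alpha/n}$ against an $O(\alpha n)$ social optimum.

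The natural candidate for~$G_s$ is a \emph{path of clusters} (or a hierarchical variant), in which the overall diameter is large---set by the backbone length---while each individual missing edge of~$H$ only connects limited populations. Adding a potential shortcut between clusters at backbone distance~$\ell$ only reduces distances for those $(x,y)$ pairs on opposite sides of the shortcut, with per-pair savings $O(\ell)$; the product of pair-count and per-pair savings must stay strictly below~$\alpha$ in order for~$s$ to be a collaborative equilibrium. The cluster sizes, the backbone length, and exactly which shortcut edges are admitted into~$H$ are tuned so that this delicate balance holds at the target diameter. The host graph~$H$ additionally contains enough short-diameter structure (for instance, a hub with many spokes) to keep the social optimum small; crucially, these shortcut edges connect vertices whose cluster neighborhoods are positioned so that each individual shortcut only benefits a controlled number of pairs.

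The main obstacle is the equilibrium verification: simultaneously achieving the desired large equilibrium diameter and preventing any shortcut from providing aggregate savings of at least~$\alpha$. This is particularly delicate in the $\sqrt{\alpha/n}$ regime, where the target diameter grows with~$\alpha$ and naive path-of-clusters constructions yield shortcut savings on the order of $n^2 D$, which would overshoot~$\alpha$. I expect to need an asymmetric or hierarchical cluster design, in which the vertices contributing to large $\sum d$ (on the ``periphery'' of the structure) are not the same as those that would be shortened by a generic shortcut, so that the number of pairs actually benefiting from any single shortcut is sublinear. A secondary check that no coalition wants to remove an existing edge of~$G_s$ is handled by ensuring that each equilibrium edge is locally essential for the short paths it lies on, so that the distance penalty on its nominal payers exceeds the shared savings of~$\alpha$.
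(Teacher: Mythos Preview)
Your plan has a genuine obstruction in the $\sqrt{\alpha/n}$ regime. You commit to a host graph~$H$ of \emph{constant} diameter, but the paper's own Lemma~\ref{k^2/3} shows that for any equilibrium $G_s$ on such a host, $d_{G_s}(u,v) = O(\alpha^{1/3})$ for every pair $u,v$. Hence the usage cost of any equilibrium is $O(n^2\alpha^{1/3})$, and against an $\Omega(\alpha n)$ optimum the ratio you can get from usage cost is at most $O(n/\alpha^{2/3})$. This falls below your target $\sqrt{\alpha/n}$ precisely when $\alpha > n^{9/7}$, so your construction cannot cover the range $n^{9/7} \lesssim \alpha \lesssim n^{5/3}$. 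Your ``asymmetric or hierarchical cluster design'' cannot rescue this: Lemma~\ref{k^2/3} applies edge-by-edge along any constant-length host path, independent of how the rest of the graph is organized, so no rearrangement of peripheral vertices evades the $O(\alpha^{1/3})$ diameter cap. You correctly sensed the difficulty (``naive path-of-clusters \dots\ would overshoot~$\alpha$'') but the fix you gesture at does not exist under your constant-diameter hypothesis.

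The paper's construction sidesteps this by \emph{not} taking $H$ of constant diameter. It builds a single host $G_{k,l}$ with $k=\Theta(\sqrt{\alpha/n})$: a $2l$-cycle on vertices $v_1,\dots,v_{2l}$, with each cycle edge $\{v_i,v_{i+1}\}$ shadowed by a parallel path $P_i$ of length~$k$, and additional length-$k$ paths $Q_i$ between $v_{2i}$ and $v_{2i+2}$. The optimum uses the short cycle edges (diameter $\Theta(l)=\Theta(n/k)$, cost $\Theta(\alpha n + n^3/k)$), while the equilibrium $G_2$ discards all short edges and routes around the concatenated $Q$-cycle of length $\Theta(n)$, leaving the $P$-paths as dead ends. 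The equilibrium check is clean: adding a short edge only helps the $\Theta(k)$ vertices on two adjacent dead-end $P$-paths reach the other $n$ vertices, each by $O(k)$, so the total benefit is $O(nk^2)=O(\alpha)$ and can be kept below~$\alpha$. One construction then gives $\mathrm{PoA}=\Omega\bigl(n^3/(\alpha n + n^3/k)\bigr)=\Omega(\min\{n^2/\alpha,\,k\})=\Omega(\min\{n^2/\alpha,\sqrt{\alpha/n}\})$ across the whole range; the two terms of the $\min$ arise from which denominator term dominates, not from two separate constructions.
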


\begin{wrapfigure}{r}{2in}
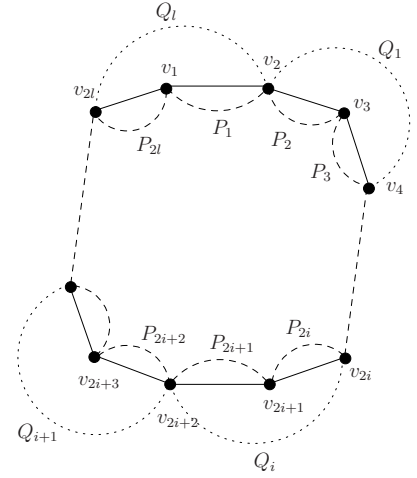

\centering
\vspace{-3ex}
\scalebox{0.65}{\ifpdf\input{lower.pdf_t}\else\input{lower.pstex_t}\fi}
\caption{Lower bound graph.}\label{fig:03}
\vspace{-2ex}
\end{wrapfigure}

\begin{proof}
For $\alpha = O(n)$ or $\alpha = \Omega(n^2)$, the claim is clear.
Otherwise, let $k=\sqrt{\frac{\alpha}{12 n}} \geq 2$.
Thus
$k = O(\sqrt n)$.
We construct graph $G_{k,l}$ as follows; see Figure~\ref{fig:03}.
Start with $2l$ vertices $v_1, v_2, \dots, v_{2l}$ connected in a cycle.
For any $1 \leq i \leq 2l$, insert a path $P_i$ of $k$ edges between $v_i$ and
$v_{i+1}$ (where we define $v_{2l+1}=v_1$).  For any $1 \leq i \leq l$,
insert a path $Q_i$ of $k$ edges between $v_{2i}$ and $v_{2i+2}$
(where we define $v_{2l+2}=v_2$). Therefore there are $n=(3k-1)l$ vertices
and $(3k+2)l$ edges in $G_{k,l}$, so $l = n/(3 k - 1)$.

For simplicity, let $G$ denote $G_{k,l}$ in the rest of the proof.
Let $G_1$ be a spanning connected subgraph
of $G$ that contains exactly one cycle, namely,
$(v_1, v_2, \dots, v_{2l}, v_1)$;
in other words, we remove from $G$ exactly one edge from each path
$P_i$ and~$Q_i$.
Let $G_2$ be a spanning connected subgraph of $G$
that contains exactly one cycle, formed by the concatenation of
$Q_1, Q_2, \dots, Q_l$, and contains none of the edges $\{v_i,v_{i+1}\}$,
for $1 \leq i \leq 2l$;
for example, we remove from $G$ exactly one edge from every
$P_{2i}$ and every edge $\{v_i,v_{i+1}\}$.

Next we prove that $G_2$ is an equilibrium.
For any $1 \leq i \leq l$,
removing any edge of path $Q_i$ increases the distance between its
endpoints and at least $n/6$ vertices by at least
$\frac{l k}{3} \geq n/6$.
Because $\alpha = o(n^2)$, we have $\alpha < \frac{n}{6} \frac{n}{6}$,
so if we assign this edge to be bought solely by one of its endpoints,
then this owner will not delete the edge.
Removing other edges makes $G_2$ disconnected.
For any $1 \leq i \leq l$, adding an edge of path $P_{2i}$ or path
$P_{2i+1}$ or edge $\{v_{2i},v_{2i+1}\}$ or edge $\{v_{2i+1}, v_{2i+2}\}$
to $G_2$ decreases only the distances from some vertices of paths $P_{2i}$
or $P_{2i+1}$ to the other vertices.  There are at most $n(2k-1)$ such pairs.
Adding such an edge can decrease each of these distance by at most $3k-1$.
But we know that $\alpha \geq 12 n k^2 > 2 n (2k-1) (3k-1)$,
so the price of the edge is more than its total benefit among all nodes,
and thus the edge will not be created by any coalition.

The cost of $G_1$ is equal to
$O(\alpha n + n^2 (k+l)) = O(\alpha n + n^2 (k + \frac{n}{k}))$
and the cost of $G_2$ is
$\Omega(\alpha n + n^2(k+l k)) = \Omega(\alpha n + n^3)$.
The cost of the social optimum is at most the cost of $G_1$,
so the price of anarchy is at least
$\Omega(\frac{n^3}{\alpha n + n^3/k +kn^2}) = \Omega(\min\{\frac{n^2}{\alpha} , k, \frac{n}{k}\})$.
Because $k=O(\sqrt{n})$, the price of anarchy is at least
$ \Omega(\min\{\frac{n^2}{\alpha} , k\}) =\Omega(\min\{\frac{n^2}{\alpha} , \sqrt{\frac{\alpha}{n}}\})$.
\end{proof}

\begin{theorem} \label{lower bound unilateral}
The price of anarchy in unilateral games is
$\Omega(\min\{\frac{\alpha}{n} , \frac{n^2}{\alpha}\})$.
\end{theorem}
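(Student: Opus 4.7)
The plan is to mimic the construction of Theorem~\ref{lowerboundcooperative} but reparametrize. As before, trivial boundary cases $\alpha = O(n)$ and $\alpha = \Omega(n^2)$ can be dispensed with immediately, so assume $n < \alpha < n^2$. Set $k = \lceil c \, \alpha / n \rceil$ for a small constant $c > 0$ (instead of $k \approx \sqrt{\alpha/n}$), take $l = n/(3k-1)$, and build $G_{k,l}$, $G_1$, and $G_2$ exactly as in the earlier proof. In the unilateral model we must have each edge owned by a single endpoint; assign ownership so that each edge of the $Q$-cycle and each edge of the $P_{2i+1}$ paths (the edges present in $G_2$) is owned by one of its endpoints.

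Next, I would verify that $G_2$ is a unilateral Nash equilibrium by checking both kinds of deviations. For edge \emph{deletion}, removing any edge of a $Q_i$ path breaks the only short route around the big cycle of length $lk = \Theta(n)$, so the owner's distances to at least $\Omega(n)$ vertices on the far side each grow by $\Omega(n)$, giving an increase in path cost of $\Omega(n^2) \gg \alpha$ (the last inequality holds for any $\alpha = o(n^2)$). Edges interior to the $P_{2i+1}$ paths cannot be removed without disconnecting $G_2$. For edge \emph{insertion} by a single player $u$, the candidate edges are the missing edges of $P_{2i}$ and the hub edges $\{v_j,v_{j+1}\}$. In $G_2$, such an edge creates a shortcut that shrinks pairwise distances by at most $3k-1$, so $u$'s total savings is bounded by $n \cdot (3k-1) = O(nk)$. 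With $k = \Theta(\alpha/n)$, this is $O(\alpha)$, and by choosing $c$ small enough the savings stay strictly less than $\alpha$, so $u$ has no incentive to add the edge.

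Finally, I would bound the two sides of the price-of-anarchy ratio. The social optimum is at most the cost of $G_1$, which has $\Theta(\alpha n)$ edges and diameter $O(l + k)$, giving total cost $O(\alpha n + n^2(l+k)) = O(\alpha n + n^3/k + n^2 k) = O(\alpha n + n^4/\alpha)$ after substituting $k = \Theta(\alpha/n)$. The cost of $G_2$ is $\Theta(\alpha n + n^3)$ because the $Q$-cycle forces $\Omega(n)$ average distance. Dividing yields a lower bound of $\Omega(n^3 / (\alpha n + n^4/\alpha)) = \Omega(\min\{\alpha/n,\, n^2/\alpha\})$, splitting at $\alpha \approx n^{3/2}$.

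The main obstacle will be the insertion analysis: unlike in Theorem~\ref{lowerboundcooperative}, where one could afford to sum distance reductions over \emph{all} pairs, here I must argue that no \emph{individual} vertex captures more than $\alpha$ worth of distance reduction from any single new edge. The bound $O(nk)$ above is clean for shortcuts of the form $\{v_j,v_{j+1}\}$, but shortcuts inside a removed $P_{2i}$ path require a slightly more careful case analysis showing that only the vertices in the two dangling sub-paths of $P_{2i}$ actually see improved routes, and that their perspectives are symmetric, so the single-player savings remain $O(nk) = O(\alpha)$. Once this verification is complete, the cost computation and the $\min$-split are routine.
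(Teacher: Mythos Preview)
Your proposal follows the approach the paper indicates (reparametrize the construction of Theorem~\ref{lowerboundcooperative} with $k=\Theta(\alpha/n)$ instead of $\Theta(\sqrt{\alpha/n})$), and the cost computations are correct apart from the slip ``$G_1$ has $\Theta(\alpha n)$ edges'': $G_1$ has $\Theta(n)$ edges, hence creation cost $\Theta(\alpha n)$, which is what your formula actually uses.

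The one place that needs more than you give it is the equilibrium check. Unilateral Nash lets a player change her \emph{entire} strategy, not just add or delete one edge, so in addition to your two single-edge analyses you must rule out \emph{swap} deviations such as ``$v_{2i}$ drops its owned $Q_i$-edge and instead buys $\{v_{2i},v_{2i+1}\}$.'' In that swap the deletion penalty and the addition benefit are each of order $\Theta(n^2)$ (the new edge re-closes the big cycle through $P_{2i+1}$), not $O(nk)$, so your two separate single-edge bounds do not simply combine; you need to choose the removed edge of each $P_{2i}$ to be the one adjacent to $v_{2i+1}$ and to assign ownership so that $v_{2i+1}$ owns nothing, after which a direct computation shows every such swap leaves the deviator's usage cost no smaller. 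Pure multi-edge \emph{additions} are then harmless by subadditivity: since every edge in a player's strategy is incident to her, any shortest path from her uses at most one new edge, so the total gain from adding a set $E'$ is at most $\sum_{e\in E'}(\text{gain from }e\text{ alone})<|E'|\cdot\alpha$. With these refinements your argument goes through, and the price-of-anarchy computation yielding $\Omega(\min\{\alpha/n,\,n^2/\alpha\})$ is exactly as you wrote.
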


The proof uses a construction similar to Theorem~\ref{lowerboundcooperative}.

\vspace*{-10ex}


\begin{thebibliography}{10}

\bibitem{Albers08}
S.~Albers.
\newblock On the value of coordination in network design.
\newblock In {\em Proc.\ 19th Annual ACM-SIAM Symposium on Discrete
  Algorithms}, pages 294--303, 2008.

\bibitem{AEEMR06}
S.~Albers, S.~Eilts, E.~Even-Dar, Y.~Mansour, and L.~Roditty.
\newblock On {N}ash equilibria for a network creation game.
\newblock In {\em Proc.\ 17th Annual ACM-SIAM Symposium on Discrete
  Algorithms}, pages 89--98, 2006.

\bibitem{AFM07}
N.~Andelman, M.~Feldman, and Y.~Mansour.
\newblock Strong price of anarchy.
\newblock In {\em Proc.\ 18th Annual ACM-SIAM Symposium on Discrete
  Algorithms}, pages 189--198, 2007.

\bibitem{ADKTW04}
E.~Anshelevich, A.~Dasgupta, J.~Kleinberg, E.~Tardos, T.~Wexler, and
  T.~Roughgarden.
\newblock The price of stability for network design with fair cost allocation.
\newblock In {\em Proc.\ 45th Annual IEEE Symposium on Foundations
  of Computer Science}, pages 295--304, 2004.

\bibitem{ADTW03}
E.~Anshelevich, A.~Dasgupta, E.~Tardos, and T.~Wexler.
\newblock Near-optimal network design with selfish agents.
\newblock In {\em Proc.\ 35th Annual ACM Symposium on Theory of
  Computing}, pages 511--520, 2003.

\bibitem{Aumann59}
R.~J. Aumann.
\newblock Acceptable points in general cooperative $n$-person games.
\newblock In {\em Contributions to the Theory of Games}, volume~4. Princeton
  University Press, 1959.

\bibitem{Bartal98}
Y.~Bartal.
\newblock On approximating arbitrary metrices by tree metrics.
\newblock In {\em Proc.\ 13th Annual ACM Symposium on Theory of
  Computing}, pages 161--168, 1998.

\bibitem{CFSK04}
B.-G. Chun, R.~Fonseca, I.~Stoica, and J.~Kubiatowicz.
\newblock Characterizing selfishly constructed overlay routing networks.
\newblock In {\em Proc.\ 23rd Annual Joint Conference of the IEEE
  Computer and Communications Societies}, 2004.

\bibitem{CP05}
J.~Corbo and D.~Parkes.
\newblock The price of selfish behavior in bilateral network formation.
\newblock In {\em Proc.\ 24th Annual ACM Symposium on Principles of
  Distributed Computing}, pages 99--107, 2005.

\bibitem{CV02}
A.~Czumaj and B.~V{\"o}cking.
\newblock Tight bounds for worst-case equilibria.
\newblock In {\em Proc.\ 13th Annual ACM-SIAM symposium on Discrete
  Algorithms}, pages 413--420, 2002.

\bibitem{PODC07}
E.~D. Demaine, M.~Hajiaghayi, H.~Mahini, and M.~Zadimoghaddam.
\newblock The price of anarchy in network creation games.
\newblock In {\em Proc.\ 26th Annual ACM Symposium on
  Principles of Distributed Computing}, pages 292--298, 2007.

\bibitem{DB91}
R.~D. Dutton and R.~C. Brigham.
\newblock Edges in graphs with large girth.
\newblock {\em Graphs and Combinatorics}, 7(4):315--321, 1991.

\bibitem{EEST05}
M.~Elkin, Y.~Emek, D.~A. Spielman, and S.-H. Teng.
\newblock Lower-stretch spanning trees.
\newblock In {\em Proc.\ 37th Annual ACM Symposium on Theory of
  Computing}, pages 494--503, 2005.

\bibitem{EK06}
E.~Even-Dar and M.~Kearns.
\newblock A small world threshold for economic network formation.
\newblock In {\em Proc.\ 20th Annual Conference on Neural
  Information Processing Systems}, pages 385--392, 2006.

\bibitem{FLMPS03}
A.~Fabrikant, A.~Luthra, E.~Maneva, C.~H. Papadimitriou, and S.~Shenker.
\newblock On a network creation game.
\newblock In {\em Proc.\ 22nd Annual Symposium on Principles of
  Distributed Computing}, pages 347--351, 2003.

\bibitem{FRT04}
J.~Fakcharoenphol, S.~Rao, and K.~Talwar.
\newblock A tight bound on approximating arbitrary metrics by tree metrics.
\newblock {\em Journal of Computer and System Sciences}, 69(3):485--497, 2004.

\bibitem{HM07}
Y.~Halevi and Y.~Mansour.
\newblock A network creation game with nonuniform interests.
\newblock In {\em Proc.\ 3rd International Workshop on Internet and
  Network Economics}, LNCS 4858,
  pages 287--292, 2007.

\bibitem{Jackson03}
M.~O. Jackson.
\newblock A survey of models of network formation: Stability and efficiency.
\newblock In G.~Demange and M.~Wooders, editors, {\em Group Formation in
  Economics; Networks, Clubs and Coalitions}. Cambridge University Press, 2003.

\bibitem{Kleinberg00}
J.~M. Kleinberg.
\newblock Navigation in a small world.
\newblock {\em Nature}, 406:845, 2000.

\bibitem{KP99}
E.~Koutsoupias and C.~H. Papadimitriou.
\newblock Worst-case equilibria.
\newblock In {\em Proc.\ 16th Annual Symposium on Theoretical
  Aspects of Computer Science}, LNCS 1563,
  pages 404--413, 1999.

\bibitem{LPRST08}
N.~Laoutaris, L.~J. Poplawski, R.~Rajaraman, R.~Sundaram, and S.-H. Teng.
\newblock Bounded budget connection ({BBC}) games or how to make friends and
  influence people, on a budget.
\newblock In {\em Proc.\ 27th ACM Symposium on Principles of
  Distributed Computing}, pages 165--174, 2008.

\bibitem{LR99}
T.~Leighton and S.~Rao.
\newblock Multicommodity max-flow min-cut theorems and their use in designing
  approximation algorithms.
\newblock {\em Journal of the ACM}, 46(6):787--832, 1999.

\bibitem{Lin03}
H.~Lin.
\newblock On the price of anarchy of a network creation game.
\newblock Class final project, December 2003.

\bibitem{Nash50}
J.~Nash.
\newblock Equilibrium points in $n$-person games.
\newblock {\em Proc.\ National Academy of Sciences}, 36(1):48--49,
  1950.

\bibitem{Nash51}
J.~Nash.
\newblock Non-cooperative games.
\newblock {\em Annals of Mathematics (2)}, 54:286--295, 1951.

\bibitem{papa01}
C.~Papadimitriou.
\newblock Algorithms, games, and the internet.
\newblock In {\em Proc.\ 33rd Annual ACM Symposium on Theory of
  Computing}, pages 749--753, 2001.

\bibitem{Roughgarden02}
T.~Roughgarden.
\newblock The price of anarchy is independent of the network topology.
\newblock In {\em Proc.\ 34th Annual ACM Symposium on Theory of
  Computing}, pages 428--437, 2002.

\bibitem{roughgarden-phd}
T.~Roughgarden.
\newblock {\em Selfish Routing}.
\newblock PhD thesis, Cornell University, 2002.
\newblock Published as \emph{Selfish Routing and the Price of Anarchy}, MIT
  Press, 2005.

\end{thebibliography}
\end{document}